\newcommand{\op}[1]{\Vert#1\Vert_{\mathrm{op}}}
\newtheorem{theorem}{{Theorem}}
\newtheorem{lem}{Lemma}
\newtheorem*{Proof*}{Proof}
\def\eE{\mathbb{E}}
\def\N{{\cal N}}
\newcommand{\Frob}[1]{\Vert#1\Vert_{\mathrm{F}}}
\newcommand*{\rom}[1]{\expandafter\@slowromancap\romannumeral #1@}
\def\diag{\hbox{diag}}
\def\Diag{\hbox{Diag}}
\def\diag{\hbox{diag}}
\def\sign{\hbox{sign}}
\def\trace{\hbox{tr}}
\def\vect{\hbox{vec}}
\def\Ind{\mathbbm{1}}
\def\Beta{\hbox{Beta}}
\def\IG{\hbox{IG}} 
\def\Normal{\hbox{N}}
\def\Unif{\hbox{Unif}}
\def\bse{\begin{eqnarray*}}
\def\ese{\end{eqnarray*}}
\def\be{\begin{eqnarray}}
\def\ee{\end{eqnarray}}
\def\bq{\begin{equation}}
\def\eq{\end{equation}}
\def\trans{^{\rm T}}
\def\bone{{\mathbf 1}}
\def\b1e{\bm{e}}
\def\bA{\bm{A}}
\def\bB{\bm{B}}
\def\bC{\bm{C}}
\def\bd{\bm{d}}
\def\bD{\bm{D}}
\def\bE{\bm{E}}
\def\be{\bm{e}}
\def\bg{\bm{g}}
\def\bH{\bm{H}}
\def\bh{\bm{h}}
\def\bI{\bm{I}}
\def\bJ{\bm{J}}
\def\bP{\bm{P}}
\def\bq{\bm{q}}
\def\bQ{\bm{Q}}
\def\bR{\bm{R}}
\def\bS{\bm{S}}
\def\bU{\bm{U}}
\def\bV{\bm{V}}
\def\bw{\bm{w}}
\def\bW{\bm{W}}
\def\bX{\bm{X}}
\def\by{\bm{y}}
\def\bY{\bm{Y}}
\def\bfY{{\mathbf Y}}
\def\bz{\bm{z}}
\def\bZ{\bm{Z}}
\def\bS{\bm{S}}
\def\bzero{\mathbf{0}}
\newcommand{\etam}{\mbox{\boldmath $\eta$}}
\newcommand{\bDelta}{\mbox{\boldmath $\Delta$}}
\newcommand{\btheta}{\mbox{\boldmath $\theta$}}
\newcommand{\bbeta}{\mbox{\boldmath $\beta$}}
\newcommand{\bSigma}{\boldsymbol{\Sigma}}
\newcommand{\bOmega}{\mbox{\boldmath $\Omega$}}
\newcommand{\bLambda}{\mbox{\boldmath $\Lambda$}}
\newcommand{\bGamma}{\mbox{\boldmath $\Gamma$}}
\renewcommand\footnoterule{\kern-3pt \hrule \textwidth 2in \kern 2.6pt}
\def\boxit#1{\vbox{\hrule\hbox{\vrule\kern6pt \vbox{\kern6pt \textcolor{blue}{#1}\kern6pt}\kern6pt\vrule}\hrule}}
\def\authorfootnote#1{{\let\thefootnote\relax\footnotetext{#1}}}
\title{Bayesian Inference for High-dimensional Time Series with a Stationary Directed Acyclic Graphical Structure}
\author{Arkaprava Roy, Anindya Roy and Subhashis Ghosal\\ {\it University of Florida, University of Maryland Baltimore County,} \\ {\it and North Carolina State University}} 
\date{}
\begin{document}

\maketitle

\begin{abstract}
    In multivariate time series analysis, understanding the underlying marginal causal relationships among variables is often of interest for various applications. A directed acyclic graph (DAG) provides a powerful framework for representing causal dependencies. We propose a novel Bayesian approach for modeling multivariate stationary, possibly matrix-variate, time series where causal relations in the stationary graph structure are encoded by a DAG. The proposed model does not assume any pre-specified parent-child ordering.
    We use a novel ``immersion-posterior'' based efficient computational algorithm to draw Bayesian inference on quantities of interest. The posterior convergence properties of the proposed method are established along with two key identifiability results for the unrestricted structural equation models. We demonstrate superiority of the proposed method over several existing competing approaches via a limited simulation study.
    Finally, our analysis of the Quarterly Workforce Indices (QWI) time series based on employment volume and earnings reveals surprising stability of the causal structure within different industry-output-based salary groups for the middle-aged labor force, with changes occurring mainly at the extreme age groups. The DAG structure among the salary groups is found to be changing across the age groups. These findings corroborate the implicit dynamics in the labor market.
\end{abstract}

\noindent\underline{\bf Key Words}: 
Markov chain Monte Carlo,
Immersion-posterior,
QWI data,
Semiparametric Model,
Stationary DAG, 
Structural Equation Model,
Whittle Likelihood

\section{Introduction}
Learning the structure of association, particularly causal association, between a large number of variables is one of the most common scientific quests in the modern data-driven world. Graphical models provide an appealing framework for characterizing association by assessing the conditional dependence between  variables in large scale multivariate studies.  There is a vast literature on embedding a probabilistic model to study the graph structure ~\citep{lauritzen1996graphical, koller2009probabilistic} of a group of variables. A graph $\mathcal{G}$ is commonly denoted by the pair $\left(\bV,\bE\right)$, where $\bV$ denotes a set of nodes (variables) and $\bE$ denotes the set of edges (associations). An important subclass of graphical models are the Directed Acyclic Graph (DAG) models where the association between each pair of nodes is directed.  
One of the most popular probabilistic approaches to DAGs are the Gaussian-directed acyclic graphs (DAGs) \citep{babula2004dynamic,shojaie2010penalized} where  the parent ordering of the nodes is assumed to be known. A source of their popularity is their for making causal discoveries \citep{pearl2000models} under suitable causal assumptions. 

However, the parent ordering is often unknown. When the ordering is unknown, the problem of learning the causal structure is much more challenging. In such situations, the DAG structure has to be learned from the data along with the strength of the strength of the dependence. 
Most works on DAG estimation rely on the structural equation model (SEM) which is also the premise of  our proposal. In  the  LINGAM approach \citep{shimizu2006linear}, an unrestricted estimate of the graphical association matrix is post-processed into a estimate under a DAG specification. Other approaches follow two-step procedures where the directional order is determined first, then the coefficients are estimated \citep{buhlmann2014cam, lee2022functional}.

Recently \cite{zheng2018dags} proposed an elegant continuous constrained optimization-based approach called NOTEAR using an algebraic characterization of DAG. Subsequently, several other continuous characterizations of the DAG constraint have been proposed \citep{yu2019dag,bello2022dagma}. These characterizations provide the opportunity for developing computationally efficient DAG estimation methods in a penalized framework using augmented Lagrangian optimization where additional penalties can be included to encourage sparsity and other structural properties.

Several approaches have been proposed for a fully Bayesian DAG estimation under known \citep{altomare2013objective, ni2015bayesian} and unknown parent ordering \citep{zhou2023functional}. In the case of unknown parent ordering, the MCMC-based computational algorithms are generally expensive. To alleviate that, two-stage procedures are often adopted where the parent ordering is identified in the first stage, and under that ordering, the DAG is estimated next \cite{shojaie2010penalized, altomare2013objective}.

Beyond the SEM framework, DAG can also be estimated using the PC-algorithm \citep{spirtes1991algorithm} that estimates the completed partially directed acyclic graph (CPDAG). \cite{kalisch2007estimating} studied the high dimensional uniform consistency of the PC algorithm. However, it relies on the faithfulness assumption
and is sensitive to individual failures of conditional independence tests. Another approach is the score-based \citep{heckerman1995learning} where an 
$M$-estimator for some score function is defined, and it is based on measuring how well the graph fits the data. Due to the large search space, the problem is, in general, NP-hard; thus, it resorts to local search.

When the nodes are time series, the problem of structure learning becomes considerably more challenging as one has to deal with dependence among the observations. Traditional approach in the time series literature involves using structural vector autoregressive models (SVAR) to
estimates the association structure between the lead and lags of the components of the multivariate time series; \cite{lutkepohl2005new, kilian2011structural, ahelegbey2016bayesian, Malinsky2018}. Recently \cite{pamfil2020dynotears} applied the NOTEAR approach on multivariate time series within a structural vector-autoregressive model (SVAR). 
\cite{runge2019detecting, runge2020discovering} propose estimation of causal relations in multivariate time series which looks at lead-lag causal associations along with the marginal ones.

For estimating causal associations among the component variables within the multivariate time series data, a somewhat different approach is based on the concept of Granger causality \citep{granger1969investigating}.  Granger causality is not based on the counterfactual arguments of a true causal framework but it does provide a sense of predictive relationships between the components of the time series.

The main emphasis of this article is on structure learning from  multivariate time series data as commonly found in applications such as economics \citep{imbens2020potential}, finance \citep{ji2018network}, neuroscience \citep{ramsey2010six}, and we aim to investigate and estimate the directional relationships among the time series. 
We focus on  `contemporaneous stationary causal' structure learning, where the causal structure is encoded in the marginal precision matrix of a stationary Gaussian time series. We characterize the contemporaneous precision matrix of a stationary time series to estimate the causal graph. In different applications, such estimates give rise to varying forms of inference and structure learning methodologies. 
Specifically, a multivariate time series can be represented as a DAG at each time point \citep{dahlhaus2003causality,ben2011high,zuo2012stock,deng2013learning}. Such a graphical dependence corresponds to a strictly lower-triangular adjacency matrix under an ordering of the nodes $V_1,\ldots,V_n$, such that $V_i$ cannot be a child of $V_j$ if $i < j$. If the process is stationary, the DAG structure stays invariant over time. Such a causal form of dependence in a multivariate `stationary' time series $(\bY_t: t=0,1,\ldots)$ may be encapsulated through a linear transform, modeling the `causal residual process' $\bY_t-\bW \bY_t$ separately from the graph structure. The operator $\bW$ decorrelates the components in $\by_t$ at each time $t$ over the nodes. In order to capture DAG relationship, $\bW$ has to satisfy the DAG-ness constraint, that it can be permuted into a strictly lower triangular matrix by simultaneous equal row and column permutations.
Some approaches follow a two-step process: (1) determining the ordering using methods like order-MCMC \citep{friedman2003being, kuipers2017partition}  and (2) using that ordering to learn the network structure.
In a time-series setting, the first step itself is complicated.
We thus propose a novel projection-posterior-based approach.

We model $p$-dimensional residual $\bY_t-\bW \bY_t$ as a vector of independent univariate stationary time series. 
The proposed model thus closely resembles the Dynamic Factor Model (DFM) and possesses a symmetric autocovariance matrix.
The model also allows  a `contemporaneous' DAG estimation, leading to the identification of marginal causal relationships among the variables.
The proposed characterization also leads to a stationary multivariate time-series model. The DAG-ness constraint on $\bW$ is imposed by a novel application of a ``projection-posterior''. The projected posterior method imposes the DAG-ness constraint on samples from the posterior
distribution of the $\bW$, applying the continuous constraint from \cite{zheng2018dags} along with an adaptive LASSO penalty \citep{zou2006adaptive} for each sample.  We also establish two identifiability conditions for the completely unrestricted SEM model to study posterior concentration properties. We show  identifiability when marginal error variances are known as well as identifiability without the known error variance assumption . In the second case we introduce a linear independence condition among the spectral densities of the components in the error process $(\bY_t-\bW \bY_t)$ to facilitate identification of the main parameters. The second result specifically leverages the time-series nature of the data. We also characterize the condition of linear independence of spectral densities of the unobserved latent process, thus allowing an empirical verification of the condition from the observed data. Unlike the existing identifiability results in the literature, our results show identifiability of the completely unrestricted SEM model.

The paper makes several key theoretical and methodological contributions to the literature of DAG models for time series: 
\begin{enumerate}
\item  It develops a novel `immersion-posterior' based Bayesian approach for estimating a `stationary' DAG structure in a semiparametric multivariate time series without assuming any parent-child ordering.
\item 
The paper establishes two key identifiability results for the completely unrestricted SEM model and establishes posterior concentration under these identifiability conditions. The second identifiability result explicitly exploits the time-series structure of the data. 
\item 
The proposed approach  is extendable to DAG modeling of matrix-variate time series, and its potential is further illustrated by an application. 
\item 
The analysis of the Quarterly Workforce Indicator series presented in the paper reveals  interesting DAG-based association patterns across various industries, highlighting differences in earning profiles across age groups.
\end{enumerate}

 The paper is organized as follows. In Section~\ref{sec:method}, we lay out our projection-posterior-based Bayesian model along some extensions motivated by the real data analysis. The pseudo-likelihood based on the Whittle approximation, specification of prior distributions, and posterior sampling strategies are described in Section~\ref{sec:prior}. The convergence properties of the posterior distribution are studied in Section~\ref{sec:convergence}. Extensive simulation studies to compare the performance of the proposed Bayesian method with other possible methods are carried out in Section~\ref{sec:simu}. A dataset on Quarterly Workforce Indicators is analyzed in Section~\ref{sec:realdata}.
 
\section{Incorporating DAG structure through projection posterior}
\label{sec:method}

We consider the structural equation model \citep{pearl1998graphs}, but allow the error to be an array of stationary processes. Specifically, we assume that the multivariate time series $\{\bY_t: t=1,\ldots,T\}$ is given by the DAG structure 
\begin{align}
    \bY_t-\bW \bY_t=&\bD^{1/2} \bZ_{t}, \label{eq:SEM}
\end{align}
where $\bW$ is a $p\times p$ matrix controlling the DAG structure, $\bD$ is a diagonal matrix with positive entries, and $\{\bZ_t: t=1,\ldots,T\}$ is a stationary process with independent components having unit variance. To follow a conventional  Bayesian approach, the prior on $\bW$ has to comply with the restriction that $\bW$ can be permuted into a strictly lower triangular matrix by simultaneous equal row and column permutations. Such a restriction is difficult to impose, and the corresponding posterior distribution would be extremely complicated, undermining posterior sampling, and convergence properties of the posterior distribution would be hard to study. To avoid the problem, we take the projection-posterior approach used earlier by \cite{lin2014bayesian,chakraborty2021convergence,chakraborty2021coverage,wang2012bayesian} for shape-restricted models and \cite{pal2024bayesian} for sparse linear regression. In this approach, the restriction is not imposed in the prior, but posterior samples are projected to comply with the desired restriction. The induced posterior distribution, called the projection posterior (or, more generally, the immersion posterior when the transformation is not necessarily a projection with respect to a distance function) is used to make an inference. As a result, a simple prior can be used, allowing efficient posterior sampling. By a simple concentration inequality (Equation (3.2) of \cite{chakraborty2021convergence}), the projection posterior inherits the contraction rate of the unrestricted posterior. As the latter can often be established from the general theory of posterior contraction \citep{ghosal2017fundamentals}, the contraction rate of the projection posterior can be obtained. 

In the present model, we only restrict $\bW$ to have zeroes in the diagonal.
To complete our model characterization, the distribution of the $i$th component process in $\bZ_{t}$ is modeled in the frequency domain by a spectral density $f_i$. We then impose priors on these parameters, with details provided in Sections~\ref{sec:prior}. Section~\ref{sec:sampling} provides the steps to get the `unrestricted' posterior samples of our model parameters $(\bW,\bD,f_1,\ldots,f_p)$. The samples of $\bW$ obtained here do not satisfy the DAG-ness constraint but are sparse due to the horseshoe prior and have zeroes in the diagonal as specified by the model construction. These samples are then passed through an immersion map that transforms these sparse unrestricted matrices with zero diagonals into sparse matrices satisfying the DAG constraint.

We set an immersion map as a solution to the minimization problem 
\begin{align} 
\label{projection operator}
\vartheta:\bW\rightarrow\bW^{*}:=\arg\min_{\bbeta} \left\{ \frac{1}{Tp} \|\bW \bY-\bbeta \bY\|_2^2 +\lambda\|\bC\odot\bbeta\|_1+\alpha h(\bbeta) + \frac{\rho}{2} h^2(\bbeta): {\bbeta\in \mathbb{R}_{*}^{p\times p}}\right\}; 
\end{align} 
where,  $h(\bbeta)=\trace(\exp\{\bbeta\odot\bbeta\})-m$ and $\mathbb{R}_{*}^{p\times p}$ stands for the set of all $p\times p$ matrices with `zero' diagonal, and tuning parameters $\lambda$ and $\bC$, standing for the LASSO penalty and the adaptive LASSO weights, respectively. The last two terms in the expression help impose the DAG-ness restriction (c.f.,  \cite{zheng2018dags}), and the adaptive LASSO \citep{zou2006adaptive} penalty 
in the second term forces sparsity in the DAG structure as in \cite{xu2022sparse}. While more choices proposed in \cite{yu2019dag,bello2022dagma} may also be applied to impose DAG-ness, we found that the penalty induced by the choice $h(\bbeta)=\trace(e^{\bbeta\odot\bbeta})-m$ performs the best in our time-series setting. 
The tuning parameters $\lambda$ and $\bC$ of the adaptive LASSO map are obtained once using the full data and kept fixed throughout the posterior sampling stage. Specifically, we let $\bC=1/|\hat{\bbeta}_{A}|^{\zeta}$ entrywise, where 
\begin{align} 
\label{alasso}
\hat{\bbeta}_{A}=\arg\min \{ \frac{1}{Tp} \|\bY-\bbeta \bY\|_2^2 +\alpha h(\bbeta) + \frac{\rho}{2} h^2(\bbeta): {\bbeta\in \mathbb{R}_{*}^{p\times p}}\}, 
\end{align}
and $\zeta$ is set to $1$ as default. Although $\zeta$ may be selected by cross-validation, the simpler choice $\zeta=1$ works well in our numerical experiments. To tune $\lambda$, we reparametrize $\bbeta$ to $\bm{\eta}=\bC\odot\bbeta$ and apply the standard LASSO cross-validation procedure to the optimization problem $(Tp)^{-1}\|\bY-(\bm{\eta}\oslash\bC) \bY\|_2^2 +\lambda\|\bm{\eta}\|_1$
using the R function {\tt cv.glmnet} from R package {\tt glmnet}, where the operator $\oslash$ stands for entrywise matrix division. 

To compute the immersion map \eqref{projection operator}, we solve the minimization problem using the R package {\tt lbfgs}. Since the LASSO penalty is not differentiable, following \cite{zheng2018dags}, we rewrite the loss and set the estimator $\hat{\bbeta}$ as $\hat{\bbeta}_1-\hat{\bbeta}_2$, where $(\hat{\bbeta}_1,\hat{\bbeta}_2)$ is obtained as the minimizer of  
\begin{align*} 
\frac{1}{Tp} \|\bW_{U}^{(t)} \bY-(\bbeta_1-\bbeta_2) \bY\|_2^2 +\lambda\sum_{i,j}c_{i,j}\beta_{1,i,j}+\lambda\sum_{i,j}c_{i,j}\beta_{2,i,j}+\alpha h(\bbeta_1-\bbeta_2) + \frac{\rho}{2} h^2(\bbeta_1-\bbeta_2) 
\end{align*} 
subject to the restriction that $\beta_{\ell,i,j}\geq 0$ for all components and $\bW_{U}^{(t)}$ is $t$-$th$ `unrestricted' posterior sample of $\bW$. 
The following algorithm can describe the posterior sampling steps. 

\begin{algorithm}[H]
Step 1: Set adaptive LASSO weights $\bC=1/|\hat{\bbeta}_{A}|$,  where $\hat{\bbeta}_{A}$ is given by \eqref{alasso}. 

Step 2: Select $\lambda$ by minimizing $\min \{ (Tp)^{-1} \|\bY-(\etam\oslash\bC) \bY\|_2^2 +\lambda\|\etam\|_1:\etam\in \mathbb{R}_{*}^{p\times p}\}$ using cross-validation.\\

Step 3: Compute projected samples $\bW^{(t)}=\vartheta(\bW_U^{(t)})$ given by 
\eqref{projection operator}, where $\bW_U^{(t)}$ is the $t$th `unrestricted' posterior sample of $\bW$ i.e. $\vartheta(\bW_U^{(t)})=\bW^{(t)}$. 

Step 4: Set a small threshold $H$ and compute the DAG-adjacency matrix $\hat{\bA}$ with the $(i,j)$th entry given by  $a_{i,j}=\Ind\{B^{-1}\sum_{t=1}^B\bone\{|w^{(t)}_{i,j}|>H\}>0.5\}$.
\caption{Projection posterior samples for DAG}
\label{algo1}
\end{algorithm}

In Step 2, we use the function {\tt cv.glmnet} from the R package {\tt glmnet}.
The thresholding value in Step 4 is set at $H=0.3$ as the default in all of our numerical experiments.
In our data application, too, this choice was reasonable as the estimated DAG was found to be stable around this cutoff while examining sensitivity.

\subsection{Extension to matrix-variate time-series}
\label{sec:extensionmodel}

We now extend our model in~\eqref{eq:SEM} to $p\times S$-dimensional matrix-variate data, denoted by $\bfY_t=(\bY^{(s)}_{t}: s=1,\ldots,S)$, observed time points $t=1,\ldots,T$, where  $\bY^{(s)}_{t}$ is a $p$-dimensional time series. 
This extension is motivated by the QWI application in Section~\ref{sec:realdata}. In this setting, we are only interested in a causal association among the row variables of $\bY_t$; 
Specifically, we are interested in a common DAG characterizing the causal associations within $\bY^{(s)}_{t}$. To achieve this, we consider a two-layer model, combining the model from the previous section and the OUT model \citep{roy2024bayesian} in layer two on the residuals. 
\begin{align*}
\bY^{(s)}_{t}-\bW \bY^{(s)}_{t}=& \bD^{1/2}\bX^{(s)}_{t},\quad \bX_{t}^{(s)}=(X^{(s)}_{k,t}: k=1,\ldots,p),\quad s=1,\ldots,S,\\
    \bX_{k,t}=&\bB\bU\bZ_{k,t},\quad k=1,\ldots,p,\quad \bX_{k,t}=(X^{(s)}_{k,t}: s=1,\ldots,S),
\end{align*}
where $\bZ_{k,t}=(Z_{s,j,t}: s=1,\ldots,S)$ are independent univariate stationary time-series with unit variance having spectral densities $f_{s,j}(\omega)$, $\bU$ is an orthogonal matrix and $\bB$ is the spherical coordinate representation of Cholesky factorizations of the correlation matrix across the groups. Thus, $\bQ=\bB\bB\trans$ is a correlation matrix.

The above model marginally at each time-point assumes a matrix-normal distribution on $\bR_{t}=(\bR^{(1)}_{t}, \ldots,\bR^{(S)}_{t})$, where $\bR^{(s)}_{t}=\bY^{(s)}_{t}-\bW \bY^{(s)}_{t}$.
Specifically, $\bR_{t}$ marginally follows the matrix-normal distribution with parameters $(\bzero, \bD,\bB\bB\trans)$; see \cite{gupta2018matrix} for the definition.

\subsubsection{Polar representation of $\bB$}

Following \cite{zhang2011new}, we represent the $S\times S$ lower-triangular matrix $\bB=(\!(b_{jk}: j,k=1,\ldots,S)\!)$ such that $\bQ=\bB\bB\trans$ is a correlation matrix through the relations 
\begin{align*}
b_{1,1} =1, \quad 
b_{2,1}=\cos a_{2,1}, \; b_{2,2}=\sin a_{2,1},\quad 
b_{3,1} =\cos a_{3,1}, \; b_{3,2}=\sin a_{3,1}\cos a_{3,2}, \; b_{3,3}=\sin a_{3,1},
\end{align*} 
and for $\ell=4,\dots,S$ and $k=2,\dots,\ell-1$, $b_{\ell,1}=\cos( a_{\ell,1})$,
\begin{align*} 
b_{\ell,k}=\sin a_{\ell,1}\sin a_{\ell,2}\cdots \sin a_{\ell,k-1}\cos a_{\ell,k-1}, \quad 
 b_{\ell,\ell}=\sin a_{\ell,1}\sin a_{\ell,2}\cdots \sin a_{\ell,k-1}\sin a_{\ell,\ell-1},
\end{align*} 
and $0\leq  a_{j,i}\leq \pi$ for $1\leq i\le j\leq S-1$, making all diagonal entries positive. A sparsity-inducing prior is applied to $a_{i,j}$'s and illustrated in Section~\ref{sec:prior}.

\subsection{Modeling univariate error processes}

Since our final data application relies on the model in Section~\ref{sec:extensionmodel}, we discuss the model for the error processes for the multi-group case directly. As in \cite{roy2024bayesian}, the latent univariate processes $(Z_{s,k,t}:t=1,2,\ldots)$, $s=1,\ldots,S,\;k=1,\ldots,p$, are assumed to be independent and stationary.
We again model these univariate time series in the spectral domain and model the spectral density of $(Z_{s,k,t}:t=1,2,\ldots)$, defined by 
$$f_{s,k}(\omega)=\gamma_j(0)+2\sum_{h=1}^\infty \gamma_{s,k}(h) \cos (h\omega),\quad  \omega\in [-\pi,\pi].$$ 
Furthermore, each spectral density is symmetric about $0$ and uniquely determines the distribution of the time series since the autocovariances $(\!(\gamma_{s,k}(h))\!)$ are given by the inverse Fourier coefficients 
$$\gamma_{s,k}(h)=\int_{-\pi}^\pi f_{\ell}(\omega) \cos(h \omega)d\omega, \quad  h=1,2,\ldots.$$ 
To ensure unit marginal variances of each latent component time series $Z_{s,k,t}$, we need to impose the restriction 
\begin{align}
 \int_{-\pi}^\pi f_{s,k}(\omega) \cos(h \omega)d\omega=\gamma_{\ell}(0)=1, \;  s=1,\ldots,S,\;k=1,\ldots,p.
 \end{align}
Hence, the spectral densities $f_{s,k}$, $s=1,\ldots,S,\;k=1,\ldots,p$, are symmetric probability densities on $[-\pi,\pi]$.

\section{Likelihood, prior distribution and posterior sampling}
\label{sec:prior}

\subsection{Likelihood}
The likelihood is based on the distributions of the univariate time series $Z_{k,t}$'s. More precisely, we use the Whittle likelihood based on the approximately independent distributions of the Fourier transforms of the data at the Fourier frequencies $\omega_1,\ldots,\omega_{\lfloor T/2\rfloor}$, which are centered normal with variances given by  
\begin{eqnarray}
\bS_1 &=& \Diag(f_1(\omega_1), \ldots, f_p(\omega_1)),\nonumber \\
\bS_{2k+1} = \bS_{2k} &=& \Diag(f_1(\omega_k), \ldots, f_p(\omega_k)), \;\; k = 1, \ldots, \lfloor(T-1)/2\rfloor, \nonumber \\
\bS_T &=& \Diag(f_1(\omega_{T/2}), \ldots, f_p(\omega_{T/2})), \; \mbox{for even } T,
\label{eq:st}
\end{eqnarray}
where for any vector $\bm{a}$, $\Diag(\bm{a})$ refers to the diagonal matrix with entries $\bm{a}$ in that order. 

\subsection{Prior distribution}

We put the commonly adopted nonparametric Bayesian prior given by a finite random series for the spectral densities. We choose the standard B-spline basis to form the series because of their shape and order-preserving properties.  Writing $B_{j}^*=B_j/\int_0^1 B_j(u)du$ for the normalized B-splines with a basis consisting of $J$ many B-splines, we may consider a model indexed by the vector of spline coefficients $\btheta_{s,k}=(\theta_{s,k,1},\ldots,\theta_{s,k,j})$ given by 
\begin{align}
    \varphi(\omega;\btheta_{s,k})= \sum_{j=1}^J \theta_{s,k,j}B_j^*(|\omega|/\pi),
    \quad 
    \theta_{s,k,j}\ge 0, \quad \sum_{j=1}^J \theta_{s,k,j}=1/2.    
    \label{eq:specden}
\end{align}

We consider the following convenient representation that avoids the restriction of nonnegativity and sum constraint:
\begin{align} 
\theta_{s,k,j}={\Psi(\kappa_{s,k,j})}/\{2\sum_{j=1}^J\Psi(\kappa_{s,k,j})\},  
\label{eq:spline in kappa}
\end{align} 
where $\Psi(u)=(1+u/(1+|u|))/2$ is a link function monotonically mapping the real line to the unit interval \citep{roy2024bayesian}. The mild polynomial tail of $\Psi$ helps better control distance on the $\theta$-parameters in terms of the $\kappa$-parameters, essential for deriving the posterior contraction rate. 

A further dimension reduction can be achieved by assuming a CANDECOMP/PARAFAC (CP) low-rank decomposition  
\begin{align}
\kappa_{s,k,j}=\sum_{r=1}^R\xi_{s,r}\chi_{k,r}\eta_{j,r}, \quad s=1,\ldots,S, \; k=1,\ldots,p, \; j=1,\ldots,J, 
\label{eq:kappa low rank tensor}
\end{align}
reducing the dimension from $pSJ$ to $R(p+S+J)$, boosting the posterior contraction rate.

We now illustrate the prior distributions for each parameter involved in our model --- the entries of $\bW$, $\bD$, and the coefficients appearing in the B-spline basis expansion of the spectral densities.

\begin{itemize}
    \item Horseshoe prior on $\bW$: For $i\neq j$, we let $w_{i,j}\sim\Normal(0,d_{i,i}\lambda^2_{i,j}\tau^2)$ with  $\lambda_{i,j},\tau\sim \mathrm{C}^{+}(0,1)$, where $\mathrm{C}^{+}$ stands for the half-Cauchy distribution. The half-Cauchy prior from \cite{carvalho2010horseshoe} admits a scale mixture representation \citep{makalic2015simple}:  
    \begin{align}
    \lambda^2_{i,j}\sim\text{IG}(1/2,1/\nu_{i,j}), \quad \tau^2\sim\text{IG}(1/2,1/\xi), \qquad \nu_{i,j},\xi\sim\text{IG}(1/2,1), 
    \end{align}
    where IG stands for the inverse-gamma distribution. 
    This representation allows a straightforward Gibbs sampling procedure. Marginally $\lambda_{i,j},\tau\sim \mathrm{C}^{+}(0,1)$.
    
    \item Soft-thresholding prior on the parameters $a_{i,j}$ of the polar representation: We put a soft-thresholding prior to promote shrinkage on $a_{i,j}$ at $\pi/2$.
$a_{i,j}=\sign(a^*_{i,j}-\pi/2)(|a^*_{i,j}-\pi/2|-\lambda)_{+}+\pi/2$ and ${a^*_{i,j}}/{\pi}\sim\Beta(a,a)$ or $a^*_{i,j}=\pi(1+a^u_{i,j}/(1+|a^u_{i,j}|))/2$ with $a^u_{i,j}\sim\Normal(0,\sigma_T^2)$. Finally, we set a uniform prior for the soft-thresholing parameter $\lambda\sim \mathrm{Un}[\lambda_L,\lambda_U]$.
   
    \item Stick-breaking mixture prior for $\bD$: To induce clustering among the entries of $\bD$, we let the entries of $\bD$ be independent and identically distributed (i.i.d.) following a distribution $G$ given a stick-breaking prior through the representation $G=\sum_{i=k}^M \pi_k\delta (m_k)$ with $m_{k}$ following an Inverse-Gaussian distribution \citep{chhikara1988inverse} with density function proportional to $t^{-3/2} e^{-(t-\mu_d)^2/(2t)}$, $t > 0$, for some $\mu_d>0$ and the random weights are given by the strick-breaking process $p_k=v_k\prod_{j=1}^{k-1}v_{j}$ with $v_j\sim$Beta$(1,v)$. Further, we put a weakly informative centered normal prior on $\mu_d$ with a large variance and let $v\sim\mathrm{Ga}(a_v,b_v)$, the gamma distribution with shape parameter $a_v$ and rate parameter $b_v$.
    
    \item The parameters in the CP-decomposition: Since an appropriate value of the rank $R$ is unknown, we consider an indirect automatic selection of $R$ through cumulative shrinkage priors \citep{bhattacharya2011sparse}. 
    \begin{itemize} 
   \item  
   For all $r=1,\ldots,R$, we put independent priors 
$$\xi_{s,r}|v_{s,r},\tau_{r}\sim \mathrm{N}(0,v_{s,r}^{-1}\tau_{r}^{-1}), \quad v_{s,r}\sim \mathrm{Ga}(\nu_1,\nu_1),\quad \tau_{r}=\prod_{i=1}^r\Delta_{i},$$
where $\Delta_{1}\sim \mathrm{Ga}(\kappa_{1}, 1)$ and $\Delta_{i}\sim \mathrm{Ga}(\kappa_{2}, 1)$, $i\geq 2$. The parameters $v_{j,r}$, $j=1,\ldots,p$, $r=1,2,\ldots$, control local shrinkage of the elements in $\xi_{j,r}$, whereas $\tau_r$ controls column shrinkage of the $r$th column.
\item
 Next, let $\etam_r=(\eta_{1,r},\ldots,\eta_{J,r})\sim\mathrm{N}_K (0,\sigma_{\kappa}\bP^{-1}), \quad  
\sigma_\xi,\sigma_{\kappa}\sim\IG(c_1,c_1)$; here 
$\bP$ is the second-order difference matrix to impose smoothness given by  $\bP=\bQ\trans \bQ$, where $\bQ$ is the $K\times(K+2)$ matrix of the second difference operator. 
\item 
On $J$, we put a prior with a Poisson-like tail $e^{-J \log J}$. 
\item 
Finally, another cumulative shrinkage prior is set for $\chi_{k,r}$.
\end{itemize}
\end{itemize}

\subsection{Posterior sampling}
\label{sec:sampling}

To obtain a sample from the unrestricted posterior of all the parameters, we execute the following steps: 
\begin{itemize}
    \item Updating $\bW$: Each row in $\bW$ enjoys full conditional Gaussian posterior.
    \item Updating $\bD$: We introduce latent indicator variables $z_{\ell}$'s for each diagonal entry $d_{\ell,\ell}$ in $\bD$. The posterior of the atoms $m_k$'s are Generalized Inverse-Gaussian and are updated
    \item Updating $\bB$: We consider adaptive Metropolis-Hastings \citep{haario2001adaptive} to update the latent polar angles $a^*_{i,j}$'s and the thresholding parameter $\lambda$ is updated based on a random-walk Metropolis-Hastings in log-scale with a Jacobian adjustment.
    \item Updating the spectral densities: We consider gradient-based Langevin Monte Carlo (LMC) to update the parameters involving $\kappa_{s,k,j}$'s as in \cite{roy2024bayesian}.
\end{itemize}
After obtaining the sample for $\bW$, we transform it to $\bW^*$ via the projection map \eqref{projection operator} and record the value of $\bW^*$, while the next iteration of the MCMC follows with the untransformed value.

\section{Posterior contraction}
\label{sec:convergence}

In this section, we show that the projection posterior distribution introduced in the two preceding sections converges to an assumed true parameter value with a DAG structure and compute the corresponding posterior contraction rate. We simplify the notations in the analysis, restrict to the simpler setting of no multi-task components in the matrix-variate case, and consider the simpler penalty without adaptation by scaling by a preliminary estimate. 
The result for the matrix-variate case can be established by extending the results presented here with additional results from \cite{roy2024bayesian}.
We make the following assumptions on the true values $(\bD_0, \bW_0, f_{0,1},\ldots,f_{0,p})$ of the parameters $(\bD, \bW, f_{1},\ldots,f_{p})$. 

\vskip10pt
\underline{Conditions on true distribution:}
\begin{description}
\item [(A1)] The true values $(\bD_0, \bW_0)$ of $(\bD,\bW)$ are such that 
\begin{itemize} 
\item the diagonal entries of $\bD_0$ are bounded above and below by two positive constants; 
\item $\|\bW_0\|_{\infty}$ is bounded and $\bW_0$ has $s$ non-zero entries; 
\item without loss of generality, $\bW_0$ is strictly lower triangular. 
\end{itemize} 
\item [(A2)] The true spectral densities $f_{0,1},\ldots,f_{0,p}$ of the latent processes are positive and H\"older continuous with a smoothness index $\alpha>0$ (cf.,  Definition~C.4 of \cite{ghosal2017fundamentals}) such that 
\begin{align}
\label{eq:spline approximation}
    \max_{1\le j\le p} \{ \sup \{|f_{0,j}(\omega)-\sum_{k=1}^K \theta_{jk}^* B_k^*(\omega)|: \omega \in [0,1]\}\lesssim K^{-\alpha},
\end{align}
for some uniformly bounded sequence $\theta_{jk}^*$ of the form \eqref{eq:spline in kappa} and \eqref{eq:kappa low rank tensor}.  
 
\item [(A3)] 
Growth of dimension: $\log p\lesssim \log T$. 
\end{description}

Note that from the third part of (A1), it follows that $\|\bI-\bW_0\|_{\mathrm{op}}=1$. It is well-known that H\"older functions with smoothness index $\alpha$ can be approximated within $O(K^{-\alpha})$ by a B-spline basis expansion with $K$ terms. Condition (A2) incorporates dimension reduction in the spectral densities by assuming that the approximation for the true spectral densities is not affected by restricting the spline coefficients by the relation \eqref{eq:kappa low rank tensor}.

Due to the complexity in nonparametric time-series modeling, we need to impose the following two restrictions on the support of $\bD$, $\bW$, and the spectral density parameters. Similar assumptions were used in Theorem 3 of \cite{roy2024bayesian}.

\underline{Conditions on the prior:}
\begin{enumerate}
    \item [(P1)] Diagonal entries $d_1,\ldots,d_p$ and the range of the functions $f_1,\ldots,f_p$ lie in a fixed, compact subinterval of $(0,\infty)$.
    \item [(P2)] Eigenvalues of $(\bI-\bW)\trans(\bI-\bW)$ are bounded between two fixed positive real values in $(0,\infty)$.
\end{enumerate}
The last condition is not too restrictive. Since the eigenvalues of $(\bI-\bW_0)$ are all 1, such a condition can be incorporated with a truncation device. 
We first define $\bOmega=(\bI-\bW)\trans\bD^{-1}(\bI-\bW)$ and similarly $\bSigma=(\bI-\bW)^{-1}\bD\{(\bI-\bW)\trans\}^{-1}$. Let 
\begin{equation}
	\label{equ6}
	\begin{aligned}
	L_{T}(\bW,\bbeta)=\frac{1}{T}\|\bW \bY-\bbeta \bY\|_2^2+\zeta_T\sum_{i,j}c_{i,j}|\beta_{ij}|+\frac{\rho}{2}|h(\bbeta)|^2+\alpha h(\bbeta),
	\end{aligned}
\end{equation}
where to compute the posterior, $\bW$ is not restricted to a DAG-structure but only has to satisfy the zero-diagonal conditions. And the projection map $\nu:\bW\rightarrow\bW^{*}:=\arg\min_{\bbeta} L_{T}(\bW,\bbeta).$

\begin{theorem}
    \label{contraction rate}
    Under Conditions {\rm (A1)--(A3)} on the true parameter and Conditions {\rm (P1)--(P2)} on the prior, we have that in true probability 
    \begin{itemize}
        \item [{\rm (i)}]  if $\bD$ is known, $\Pi(\|\bW^*-\bW_0\|>m_T \sqrt{p}\sqrt{\epsilon_T}|\bY)\to 0$  for all $m_T\to \infty$, where 
        \begin{align}
        \epsilon_T=\max(\sqrt{(p+s)/T}, T^{-\alpha/(2\alpha+1)}) \sqrt{\log T};
        \label{eq:contraction rate}
        \end{align}
        \item [{\rm (ii)}]  if $\bD$ is unknown, and the spectral densities $f_1,\ldots,f_p$ are linearly independent functions, then projection-posterior distribution for $\bW$ concentrates at $\bW_0$, that is, for any $\epsilon>0$, $\Pi(\|\bW^*-\bW_0\|> \epsilon|\bY)\to 0$. 
    \end{itemize}
    \label{eq:Theorem1}
\end{theorem}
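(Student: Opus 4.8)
The plan is to argue in two stages. First I would establish a contraction rate (part (i)) or mere consistency (part (ii)) for the \emph{unrestricted} posterior on $(\bW,\bD,f_1,\ldots,f_p)$, which ignores the acyclicity constraint and is governed by the simple horseshoe-plus-B-spline prior; then I would transfer this to the projected iterate $\bW^{*}=\nu(\bW)$ using that the true $\bW_0$ is already feasible. Concretely, by (A1) $\bW_0$ is strictly lower triangular, so $h(\bW_0)=0$ and $\bW_0$ is an admissible competitor in $L_T(\bW,\cdot)$; the projection concentration inequality (Equation (3.2) of \cite{chakraborty2021convergence}) then lets the projected posterior inherit the behavior of the unrestricted one. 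Thus the two parts reduce to (a) unrestricted posterior contraction/consistency under the Whittle likelihood and (b) a deterministic bound relating $\bW^{*}$ to $\bW_0$.

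For the unrestricted step in part (i) with $\bD$ known, I would invoke the general contraction theory of \cite{ghosal2017fundamentals} applied to the Whittle pseudo-likelihood, whose periodogram ordinates are approximately independent centered Gaussians across the $\lfloor T/2\rfloor$ frequencies, with block covariances built from the spectral density
\begin{align*}
f_{\bY}(\omega)=(\bI-\bW)^{-1}\bD^{1/2}\Diag(f_1(\omega),\ldots,f_p(\omega))\bD^{1/2}\{(\bI-\bW)\trans\}^{-1}.
\end{align*}
The ingredients are a Kullback--Leibler prior-mass bound, where the horseshoe prior supplies mass of order $e^{-c(p+s)\log T}$ on the $s$ nonzero off-diagonal entries while the CP-tensor prior of \eqref{eq:kappa low rank tensor} contributes the nonparametric factor, together matching $\epsilon_T$ in \eqref{eq:contraction rate}; a sieve of entropy $O(T\epsilon_T^2)$ from truncating $\bW$ in operator norm via (P2), $\bD$ and the $f_j$ to the compact range of (P1), and $K\asymp T^{1/(2\alpha+1)}$; and exponentially powerful tests in average Hellinger distance on the Gaussian blocks. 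A local Lipschitz bound for $\bW\mapsto f_{\bY}$, valid since $\bD$ is fixed and $\lambda_{\min}(\bSigma)$ is bounded below by (P1)--(P2), converts the resulting contraction into $\|\bW-\bW_0\|\lesssim\epsilon_T$ for the unrestricted draw. The final rate then comes from the projection step: since $\bW^{*}$ minimizes $L_T(\bW,\cdot)$ and $\bW_0$ is feasible with $h(\bW_0)=0$, dropping the nonnegative acyclicity and LASSO penalties at $\bW^{*}$ gives
\begin{align*}
\tfrac{1}{T}\|(\bW^{*}-\bW)\bY\|_2^2\le \tfrac{1}{T}\|(\bW_0-\bW)\bY\|_2^2+\zeta_T\textstyle\sum_{i,j}c_{i,j}|w_{0,ij}|.
\end{align*}
A triangle inequality, the lower bound $\tfrac{1}{T}\|(\bW^{*}-\bW_0)\bY\|_2^2\ge \lambda_{\min}(\hat{\bSigma}_Y)\Frob{\bW^{*}-\bW_0}^2$ with $\lambda_{\min}(\hat{\bSigma}_Y)\gtrsim 1$, the bound $\tfrac{1}{T}\|(\bW-\bW_0)\bY\|_2^2\lesssim\epsilon_T^2$, and the choice $\zeta_T\asymp\epsilon_T$ give $\Frob{\bW^{*}-\bW_0}^2\lesssim \epsilon_T s\le \epsilon_T p$, hence $\Frob{\bW^{*}-\bW_0}\lesssim\sqrt{p}\sqrt{\epsilon_T}$, with the diverging $m_T$ absorbing constants.

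For part (ii) with $\bD$ unknown a rate is out of reach because identifiability is only qualitative, so the target is consistency. Writing $\ba_i$ for the $i$th column of $(\bI-\bW)^{-1}$, normalized by $a_{ii}=1$ since $\bW$ has zero diagonal, the spectral density factorizes as $f_{\bY}(\omega)=\sum_{i=1}^p d_i f_i(\omega)\,\ba_i\ba_i\trans$. When $f_1,\ldots,f_p$ are linearly independent, this representation is unique up to a permutation: two such sums agreeing for all $\omega$ force a matching of the profiles $f_i$, hence of the rank-one matrices $d_i\ba_i\ba_i\trans$, and $a_{ii}=1$ recovers $d_i$ from the $(i,i)$ entry and $\ba_i$ from the corresponding column, while strict lower-triangularity of $\bW_0$ fixes the permutation; this identifies $(\bW_0,\bD_0,f_0)$ from $f_{\bY,0}$. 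Weak posterior consistency for $f_{\bY}$ (via Kullback--Leibler support and a Schwartz-type argument), together with continuity of this inverse map on the compact set (P1)--(P2), yields consistency of the unrestricted $\bW$, and the projection inequality transfers it to $\bW^{*}$.

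The main obstacle differs across the two parts. In part (i) it is coupling the penalized-projection basic inequality with the high-dimensional concentration of $\hat{\bSigma}_Y$ while keeping the adaptive-LASSO and acyclicity terms controlled as $p$ grows, which is exactly what degrades the clean rate $\epsilon_T$ to $\sqrt{p}\sqrt{\epsilon_T}$. In part (ii) the harder point is that linear independence of the spectral densities delivers uniqueness but no modulus of continuity, so consistency must be reached through a compactness/continuity route rather than a quantitative bound, and one must verify that the qualitative separation survives the passage from the population spectral density to the posterior.
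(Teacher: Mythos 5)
Your two-stage architecture (unrestricted posterior contraction, then transfer through the immersion map) matches the paper's, and your part (ii) is essentially the paper's argument: identifiability of $f_{\bY}(\omega)=\sum_i d_i f_i(\omega)\ba_i\ba_i\trans$ from linear independence (the paper's Lemma~\ref{lem:identifiability without}), followed by continuity of the inverse map --- you via compactness of the support under (P1)--(P2), the paper via the implicit function theorem. Part (i), however, has a genuine gap at the step converting unrestricted contraction into $\|\bW-\bW_0\|\lesssim\epsilon_T$. Contraction under the Whittle likelihood is obtained in a distance on distributions, equivalently on $\bOmega=(\bI-\bW)\trans\bD^{-1}(\bI-\bW)$; to return to $\bW$ you need a modulus of continuity for the \emph{inverse} map, and your ``local Lipschitz bound for $\bW\mapsto f_{\bY}$'' points the wrong way. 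Worse, no Lipschitz inverse bound exists on the unrestricted (zero-diagonal) class: take $p=2$, $\bD=\bI$, $\bW_0=\bzero$, all $f_j$ equal, and $\bW$ with off-diagonal entries $a$ and $-a$; then $\bI-\bW$ is $\sqrt{1+a^2}$ times a rotation, so $\bOmega=(1+a^2)\bI$ and the full spectral density differs from the truth by $O(a^2)$, while $\Frob{\bW-\bW_0}\asymp|a|$. The best possible modulus is H\"older-$1/2$, and quantifying it uniformly --- resolving the orthogonal ambiguity $\bI-\bW=\bD^{1/2}\bOmega^{1/2}\bP$ row by row --- costs an additional $\sqrt{p}$: this is exactly the paper's Lemma~\ref{lem:soundness}, $\Frob{\bW-\bW_0}\lesssim\sqrt{p}\,\Frob{\bOmega-\bOmega_0}^{1/2}$, and it, not the projection step, is the actual source of the rate $\sqrt{p}\sqrt{\epsilon_T}$ in the theorem. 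Your intermediate claim of an $\epsilon_T$-rate for the unrestricted $\bW$ is therefore false, and the steps downstream rest on that premise.

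Your projection step has a separate problem: the lower bound $T^{-1}\|(\bW^*-\bW_0)\bY\|_2^2\ge\lambda_{\min}(\hat{\bSigma}_Y)\Frob{\bW^*-\bW_0}^2$ with $\lambda_{\min}(\hat{\bSigma}_Y)\gtrsim1$ cannot hold when $p>T$, since $\hat{\bSigma}_Y=T^{-1}\bY\bY\trans$ is then singular; the paper's regime allows $p>T$ (condition (A3) only requires $\log p\lesssim\log T$, and the simulations use $p=40$, $T=32$), so you would need a restricted-eigenvalue argument over a sparse cone, which you do not supply. The paper avoids this entirely: its transfer step uses only the quasi-triangle inequality $L(\bW_0,\bW^*)\le(c+1)L(\bW,\bW_0)+cP_{\lambda,\alpha,\rho}(\bW_0)$ with the penalty chosen small, so the projection \emph{inherits} the rate already established for the unrestricted $\bW$ rather than producing a new one. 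If you replace your Lipschitz step by Lemma~\ref{lem:soundness} and your eigenvalue bound by this inheritance argument, your outline becomes the paper's proof.
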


The proof of the theorem is given in the supplementary material. The main objective of the proposed methodology is to recover the DAG structure given by the matrix $\bW$. To do so, we show that the posterior concentrates around the true value.  In the proof of Theorem~\ref{contraction rate}, we first establish  concentration properties of the unrestricted posterior, and then show that the posterior induced by the map $\bW\mapsto \bW^*$  inherits those concentration properties. While $\bW\mapsto \bW^*$ is non-convex, and hence not a projection but an immersion, the concentration properties still transport from the unrestricted posterior to the projected posterior due to the local properties of the map in the neighborhood of the true value (see proof for details). 

Without the boundedness conditions, concentration of the unrestricted posterior can still be established in terms of the average truncated squared Frobenius distance on the Whittle transform of each component series. This is weaker than the assertions in Theorem~1, and the result would be analogous to that given in  Theorem~2 of \cite{roy2024bayesian}. To satisfy the bounded range requirements in Theorem~1, the priors on $\bD$ and $\chi_{k,r}$ and $\eta_{j,r}$, $k=1,\ldots,p$, $j=1,\ldots,J$ in the representation \eqref{eq:kappa low rank tensor} can be truncated. These conditions are similar to the restriction on the error variances used in \cite{loh2014high} for a high-dimensional setting without time dependence.

\section{Simulation}
\label{sec:simu}
We run three simulation experiments to evaluate our proposed model's performance. 
We generate the simulated datasets following the proposed model with $p=40, S=15$, and $T=32$ or $48$ with three different choices for the distributions of the univariate time series. The specific differences are described at the beginning of each subsection. For all the cases, the sparse DAG matrix, $\bW$, is generated using {\tt randDAG} of {\tt pcalg}  \citep{pcalgR} with two possible values for the `Expected neighbors' as 2 and 4. This leads to different levels of sparsity in $\bW$. The weights for the edges are generated from $\Unif((-2,-0.5)\cup(0.5,2))$.

The entries in $\bD$ are generated as absolute values of $\Normal(7,2)$.
The $S\times S$ correlation matrix $\bB$ is generated in two steps. First, a precision matrix is generated using g-Wishart where the underlying graph is simulated by combining three small-worlds, each with 5 nodes similar to \cite{roy2024bayesian}.
Then, we take its inverse and scale to get the correlation matrix $\bB$.
 We compare our methods with NO-TEARS \citep{zheng2018dags} in terms of both estimation of $\bW$ and identification of edges. We also compared PC and LINGAM, but only in identifying DAG edges. There is a polynomial version of \cite{zheng2018dags} and is implemented in R package {\tt gnlearn}. However, it worked very poorly and is hence omitted.
Since the data is temporal, we also tried to decorrelate the data first marginally and then apply algorithms like NO-TEARS, PC, or LINGAM. 
However, the performance worsened or remained relatively the same compared to no adjustment. It may be because the overall covariance of the data is not separable, as different latent time series possess different covariance kernels.
Hence, the presented results are based on the direct application of the alternative methods to the simulated data. 

We also fit the NOTEARS method with adaptive LASSO penalty and employ Algorithm~\ref{algo1}, replacing $\bW^{(t)}\bY$ with $\bY$ in Step 3. Thresholding similar to step 4 is also applied to estimate the DAG. We call this A-NOTEARS, and it works better than the original NOTEARS with LASSO penalty. The rank-PC is fitted following \citep{harris2013pc}.
We compare different methods both in terms of estimation MSE for $\bW$ and Mathew's Correlation Coefficient (MCC) in identifying the true DAG.

In Section~\ref{sec:singlesimu}, we compare the method with some of the existing causal relation identification methods for time series data such as DYNOTEAR \citep{pamfil2020dynotears}, PCMCI \citep{runge2019detecting}, and PCMCIplus \citep{runge2020discovering}. 
The first method is fitted applying the Python package {\tt causalnex} and the last two are fitted applying the Python package {\tt Tigramite}. Since the software implementations for these alternative methods are designed for $S=1$ case, we generate such data for this comparison.

\subsection{Simulation setting 1}

The latent univariate stationary series are generated from Gaussian Processes with an exponential kernel. These Gaussian processes only differ in the range parameter, which is uniformly generated from (0, 10).

\begin{table}[htbp]
\centering
\caption{Estimation MSE in estimating $\bW$ when $\bZ_{\ell}$'s are generated following exponential covariance.}
\resizebox{0.9\textwidth}{!}{
\begin{tabular}{|c|ccc|ccc|}
\hline
    Time points & \multicolumn{3}{c|}{Expected neighbors = 2} & \multicolumn{3}{c|}{Expected neighbors = 4} \\
    \hline
    & DAG-OUT & A-NOTEARS & LINGAM & DAG-OUT & A-NOTEARS & LINGAM \\ 
    \hline
32 & 0.004 & 0.02 & 0.02 & 0.04 & 0.08 & 0.09  \\ 
48 & 0.003 & 0.02 & 0.01 & 0.04 & 0.07 & 0.07 \\ 
\hline
\end{tabular}
}
\end{table}

\begin{table}[htbp]
\centering
\caption{MCC when $\bz_{\ell}$'s are generated following Gaussian process with exponential kernel.}
\begin{tabular}{|c|ccccc|}
\hline
    Time points &\multicolumn{5}{|c|}{Expected neighbors = 2} \\
    \hline
  \hline
 & DAG-OUT & A-NOTEARS & LINGAM & PC & rank-PC \\ 
  \hline
32 & 0.91 & 0.59 & 0.62 & 0.36 & 0.29   \\ 
  48 & 0.92 & 0.61 & 0.59 & 0.41 & 0.39\\ 
   \hline
   & \multicolumn{5}{|c|}{Expected neighbors = 4}\\
   \hline
    & DAG-OUT & A-NOTEARS & LINGAM & PC & rank-PC  \\ 
  \hline
32 & 0.61 & 0.46 & 0.41 & 0.28 & 0.22  \\ 
  48 & 0.62 & 0.48 & 0.45 & 0.34 & 0.30  \\ 
  \hline
\end{tabular}
\end{table}

\subsection{Simulation setting 2}
In this section, the exponential covariance is replaced by the covariance kernel $$K(t_1,t_2)=\sum_{h=1}^{M}a_h\cos(h\pi|t_1-t_2|),$$ 
where $t_1,t_2\in [0,1]$ and set $a_h=1/h^2$. For different univariate time series, $M$ is sampled from $\{1,\ldots,T\}$ uniformly at random. Different values of $M$ induce different degrees of smoothness.

\begin{table}[htbp]
\centering
\caption{Estimation MSE in estimating $\bW$ when $\bz_{\ell}$'s are generated following cosine covariance kernel.}
\resizebox{0.9\textwidth}{!}{
\begin{tabular}{|c|ccc|ccc|}
\hline
    Time points &\multicolumn{3}{|c|}{Expected neighbors = 2} & \multicolumn{3}{|c|}{Expected neighbors = 4}\\
    \hline
  \hline
 & DAG-OUT & A-NOTEARS & LINGAM & DAG-OUT & A-NOTEARS & LINGAM \\ 
  \hline
32 & 0.02 & 0.03 & 0.02 & 0.05 & 0.10 & 0.10   \\ 
  48 & 0.01& 0.02 & 0.02 & 0.04 & 0.08 & 0.08\\ 
   \hline
\end{tabular}
}
\end{table}

\begin{table}[htbp]
\centering
\caption{MCC when $\bZ_{\ell}$'s are generated following a Gaussian process with a cosine covariance kernel.}
\begin{tabular}{|c|ccccc|}
\hline
    Time points &\multicolumn{5}{|c|}{Expected neighbors = 2} \\
    \hline
  \hline
 & DAG-OUT & A-NOTEARS & LINGAM & PC & rank-PC \\ 
  \hline
32 & 0.81 & 0.46 & 0.53 & 0.27 & 0.29    \\ 
  48 & 0.83 & 0.50 & 0.54 & 0.42 & 0.37\\ 
   \hline
   & \multicolumn{5}{|c|}{Expected neighbors = 4}\\
   \hline
    & DAG-OUT & A-NOTEARS & LINGAM & PC & rank-PC  \\ 
  \hline
32 &  0.55 & 0.41 & 0.38 & 0.24 & 0.18  \\ 
  48 & 0.63 & 0.45 & 0.41 & 0.25 & 0.32  \\ 
  \hline
\end{tabular}
\end{table}

\subsection{Simulation setting 3}
The latent univariate stationary series are generated from ARMA(1,1) model as $Z_{i,t}= \phi Z_{i,t-1}+\theta \varepsilon_{i, t-1}+\varepsilon_{i, t}$ with $\varepsilon_{i, t-1}\sim\Normal(0, \sigma_e^2)$ with $\sigma_e^2={(1-\phi^2)}/{(1+2\theta\phi+\theta^2)}$.
We generate $\theta,\phi\sim\Unif((0.9,1)\cup(-1, -0.9))$ to generate time series with strong dependence.

\begin{table}[htbp]
\centering
\caption{Estimation MSE in estimating the precision matrix of dimension $40\times 40$ when $\bZ_{\ell}$'s are generated following causal ARMA(1,1) models.}
\resizebox{0.9\textwidth}{!}{
\begin{tabular}{|c|ccc|ccc|}
\hline
    Time points &\multicolumn{3}{|c|}{Expected neighbors = 2} & \multicolumn{3}{|c|}{Expected neighbors = 4}\\
    \hline
  \hline
 & DAG-OUT & A-NOTEARS & LINGAM & DAG-OUT & A-NOTEARS & LINGAM \\ 
  \hline
32 & 0.01 & 0.02 & 0.02 & 0.05 & 0.10 & 0.13 \\ 
  48 & 0.01 & 0.02 & 0.02 & 0.03 & 0.06 & 0.12 \\ 
   \hline
\end{tabular}
}
\end{table}

\begin{table}[htbp]
\centering
\caption{MCC when $\bZ_{\ell}$'s are generated following causal ARMA(1,1) models.}
\begin{tabular}{|c|ccccc|}
\hline
    Time points &\multicolumn{5}{|c|}{Expected neighbors = 2} \\
    \hline
  \hline
 & DAG-OUT & A-NOTEARS & LINGAM & PC & rank-PC \\ 
  \hline
32 & 0.77 & 0.60 & 0.53 & 0.28 & 0.34 \\ 
  48 & 0.82 & 0.63 & 0.40 & 0.39 & 0.37 \\ 
   \hline
   & \multicolumn{5}{|c|}{Expected neighbors = 4}\\
   \hline
    & DAG-OUT & A-NOTEARS & LINGAM & PC & rank-PC  \\ 
  \hline
32 & 0.61 & 0.44 & 0.32 & 0.24 & 0.25 \\ 
  48 & 0.65 & 0.48 & 0.40 & 0.28 & 0.38 \\
  \hline
\end{tabular}
\end{table}

All the simulation results favour DAG-OUT both in terms of estimation MSE and MCC. The MCC is in general lower when the expected neighbors set to a higher value due to increasing complexity. In general, with increasing sample size, the estimation MSE is lower and MCC is higher.

\subsection{Single task simulation}
\label{sec:singlesimu}
In the previous simulations, following the structure of our motivated application, our generated data were matrix-variate with $S=15$.
Here, we consider a single task case with $S=1$ to compare the proposed projection-posterior based approach with DYNOTEAR \citep{pamfil2020dynotears}, PCMCI \citep{runge2019detecting}, and PCMCIplus \citep{runge2020discovering} as the available softwares only support this setting.
Here we generate the data from all three simulation settings from the preceding subsections with different choices of expected neighbors and sample size, and compare the MCCs.
We use the default tuning parameter settings for DYNOTEAR, PCMCI, and PCMCIplus, as there is no clear direction in how to choose them. Although these alternative approaches also provide across-lag causal relations, we focus here only on the marginal or lag-0 associations.
In this simulation experiment, we only consider $T=48$.
Here, DAG-OUT is still the best in all simulation scenarios, particularly when the `expected neighbors' is set at 4. The worst performance of DYNOTEAR may be due to the default tuning parameter setting from \cite{pamfil2020dynotears}. MCC under this $S=1$ case is, in general, lower than the previous cases, with $S=15$.
It may be since there is more information in the data regarding the underlying DAG when $S=15$.

\begin{table}[htbp]
\centering
\caption{MCC comparison with DYNOTEAR, PCMCI, and PCMCIplus for three simulation settings with different generation schemes for the latent process.}
\begin{tabular}{|c|ccccc|}
\hline
\hline
 Expected neighbors    &\multicolumn{4}{|c|}{$\bZ_{\ell}$'s are generated from Exponential Kernel} \\
    \hline
  \hline
 & DAG-OUT & DYNOTEAR & PCMCI & PCMCIplus \\ 
  \hline
 2  & 0.38 & 0.05 & 0.30  & 0.33   \\ 
   \hline
  4 & 0.29 & 0.02  & 0.16 & 0.19  \\
  \hline
\hline
     &\multicolumn{4}{|c|}{$\bZ_{\ell}$'s are generated from Cosine kernel} \\
    \hline
  \hline
 & DAG-OUT & DYNOTEAR & PCMCI & PCMCIplus \\ 
  \hline
2  & 0.35 & 0.03 & 0.27 & 0.29   \\ 
   \hline
 4  & 0.29 & 0.02  & 0.20  & 0.21  \\
  \hline
\hline
   &\multicolumn{4}{|c|}{$\bZ_{\ell}$'s are generated from ARMA(1,1)} \\
    \hline
  \hline
 & DAG-OUT & DYNOTEAR & PCMCI & PCMCIplus \\ 
  \hline
  2  & 0.42 & 0.06 & 0.33 & 0.33   \\ 
   \hline
4  & 0.36 & 0.02 & 0.18 & 0.19   \\
  \hline
\end{tabular}
\end{table}

\section{QWI data analysis}
\label{sec:realdata}
We study causal associations among different age and earning groups based on three earning-related variables, listed in Table~\ref{tab:response}. The data set is the Quarterly Workforce Indicators data published by the US Census Bureau.  
We use the EarnS variable from Year 1993 to Year 1997 to form 6 earning/salary groups (as shown in Table \ref{tab:salgrp}), following the convention that group $i$ on average earns more than group $i-1$. Then we fit our proposed model to the data from 1998 to 2018. As a pre-processing step, we remove a linear trend and then mean-center and normalize. Here $S=3$ due to three responses. We take Age $\times$ Earning groupings, forming $8\times 6=48$ combinations, which is our $p$. The estimated directed acyclic graph (DAG) provides causal associations among these 48 combinations of ages and salaries.
We further study the resulting collapsed DAGs among the salary and age groups separately.







\begin{table}[htbp]
\caption{Variables considered as multi-task}
\centering
{\resizebox{.9\textwidth}{!}{\begin{tabular}{rll}
  \hline
 & Variable & Explanation \\ 
  \hline
  1 & EarnS & Average monthly earnings of employees with stable jobs \\ 
  2 & EarnHirAS & Average monthly earnings for workers who started a job that turned into a job lasting a full quarter \\ 
  3 & EarnHirNS & Average monthly earnings of newly stable employees \\ 
   \hline
\end{tabular}}}
\label{tab:response}
\end{table}

\begin{table}[htbp]
\centering
\caption{Industries with their corresponding salary groups. Higher salary groups earn more}
\begin{tabular}{rl}
  \hline
Salary group & Industry names \\ 
  \hline
1 & Agriculture, Forestry, Fishing and Hunting \\ 
  6 & Mining, Quarrying, and Oil and Gas Extraction \\ 
  6 & Utilities \\ 
  4 & Construction \\ 
  4 & Manufacturing \\ 
  4 & Wholesale Trade \\ 
  1 & Retail Trade \\ 
  3 & Transportation and Warehousing \\ 
  5 & Information \\ 
  6 & Finance and Insurance \\ 
  3 & Real Estate and Rental and Leasing \\ 
  5 & Professional, Scientific, and Technical Services \\ 
  5 & Management of Companies and Enterprises \\ 
  2 & Administrative and Support and Waste Management and Remediation Services \\ 
  2 & Educational Services \\ 
  3 & Health Care and Social Assistance \\ 
  2 & Arts, Entertainment, and Recreation \\ 
  1 & Accommodation and Food Services \\ 
  1 & Other Services (except Public Administration) \\ 
   \hline
\end{tabular}
\label{tab:salgrp}
\end{table}


\begin{figure}[htbp]
\centering
\subfigure{\includegraphics[width = 0.6\textwidth]{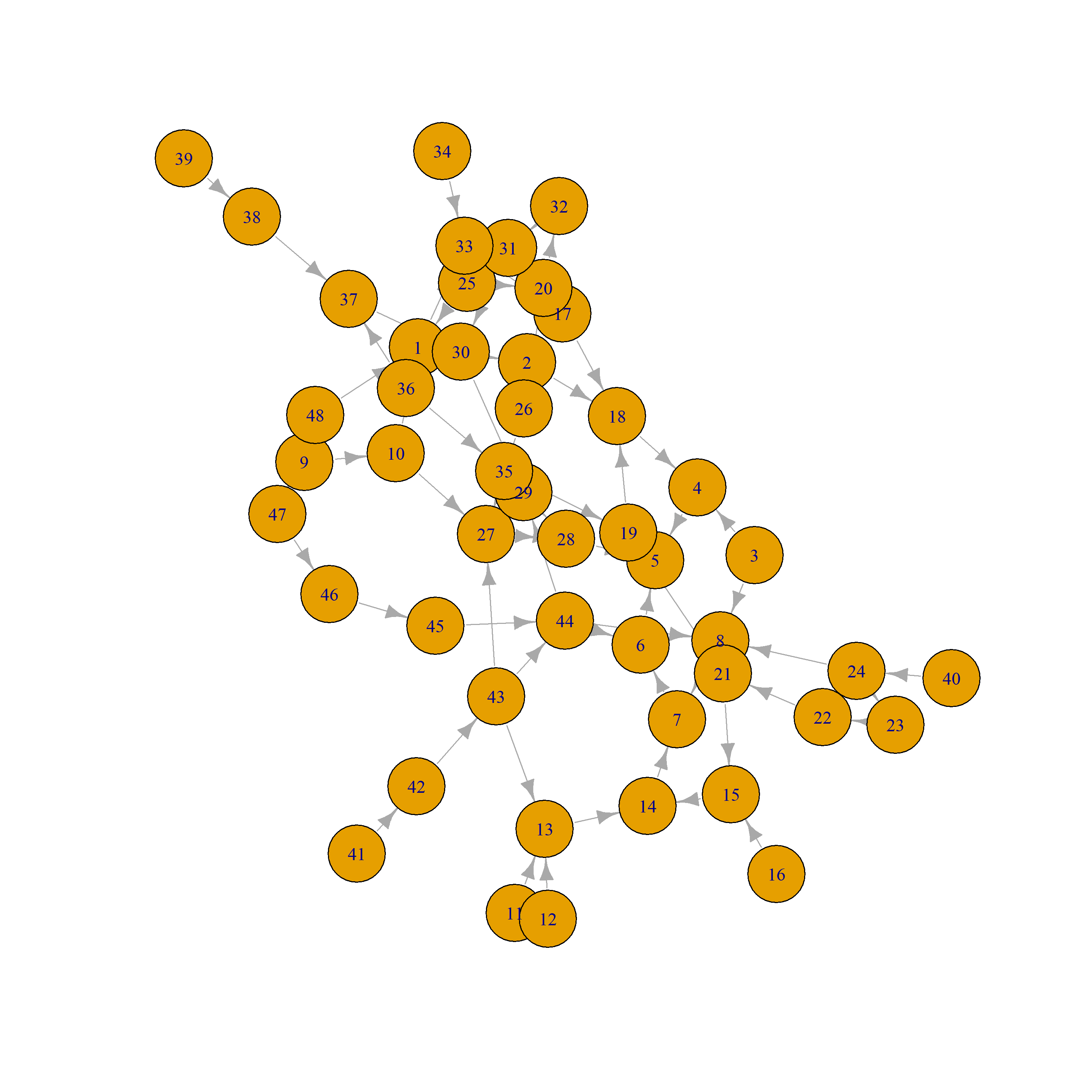}}
\subfigure{\includegraphics[width = 0.15\textwidth]{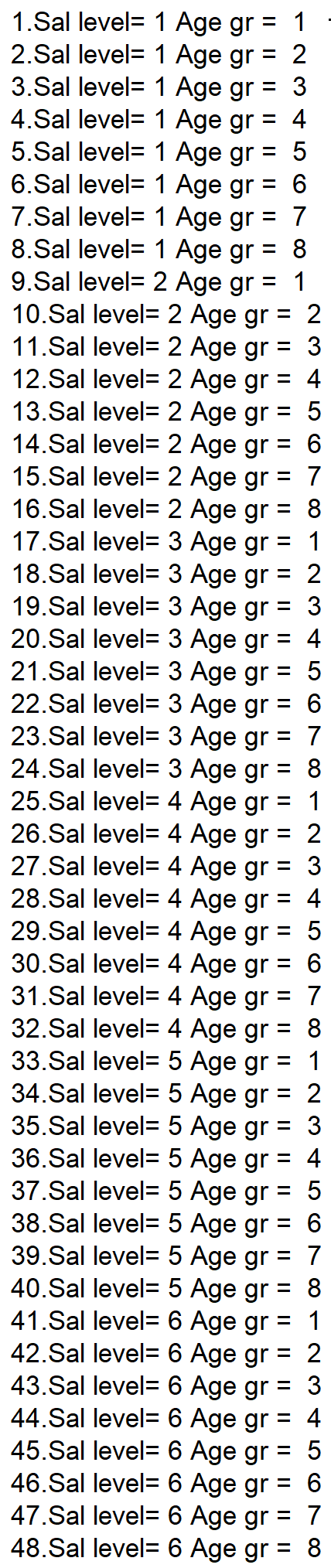}}
\caption{Estimated directed acyclic graphical connections among different variables with different salary levels and age groups.}
\label{fig::real}
\end{figure}

\begin{table}[htbp]
\centering
\caption{Age groups}
\begin{tabular}{ccccccccc}
  \hline
Age-groups & 1 &  2 & 3 & 4 & 5 & 6 & 7 & 8 \\ 
  \hline
Range & 14-18 & 19-21 & 22-24 & 25-34 & 35-44 & 45-54 & 55-64 & 65-99 \\ 
   \hline
\end{tabular}
\label{tab:agegrp}
\end{table}


\begin{table}[ht]
\centering
\caption{Structured Hamming distance between different pairs of salary groups.}
\begin{tabular}{ccccccc}
  \hline
 & Sal level 1 & Sal level 2 & Sal level 3 & Sal level 4 & Sal level 5 & Sal level 6 \\ 
  \hline
Sal level 1 & 0 & 4 & 5 & 4 & 4 & 3 \\ 
  Sal level 2 & 4 & 0 & 5 & 6 & 4 & 5 \\ 
  Sal level 3 & 5 & 5 & 0 & 5 & 6 & 4 \\ 
  Sal level 4 & 4 & 6 & 5 & 0 & 5 & 1 \\ 
  Sal level 5 & 4 & 4 & 6 & 5 & 0 & 5 \\ 
  Sal level 6 & 3 & 5 & 4 & 1 & 5 & 0 \\ 
   \hline
\end{tabular}
\label{tab:salres}
\end{table}


\begin{table}[ht]
\centering
\caption{Structured Hamming distance between different pairs of age groups.}
{\resizebox{.8\textwidth}{!}{\begin{tabular}{ccccccccc}
  \hline
 & Age group 1 & Age group 2 & Age group 3 & Age group 4 & Age group 5 & Age group 6 & Age group 7 & Age group 8 \\ 
  \hline
Age group 1 & 0 & 4 & 4 & 3 & 3 & 3 & 3 & 3 \\ 
  Age group 2 & 4 & 0 & 4 & 1 & 1 & 1 & 1 & 2 \\ 
  Age group 3 & 4 & 4 & 0 & 3 & 3 & 3 & 3 & 3 \\ 
  Age group 4 & 3 & 1 & 3 & 0 & 0 & 0 & 0 & 2 \\ 
  Age group 5 & 3 & 1 & 3 & 0 & 0 & 0 & 0 & 2 \\ 
  Age group 6 & 3 & 1 & 3 & 0 & 0 & 0 & 0 & 2 \\ 
  Age group 7 & 3 & 1 & 3 & 0 & 0 & 0 & 0 & 2 \\ 
  Age group 8 & 3 & 2 & 3 & 2 & 2 & 2 & 2 & 0 \\ 
   \hline
\end{tabular}}}
\label{tab:ageres}
\end{table}

The maximum number of possible edges in the graph is 1128 resulting in a dense graph. However, in practice, the proposed model is expected to estimate a much sparser graph with a few meaningful causal connections. 
Figure~\ref{fig::real} illustrates the estimated DAG. The estimated DAG has approximately 62 connections that are deemed significant. 
Here, we see that most of the edges or causal associations are confined within a fixed salary group.
Within a given salary group, the edges would suggest casual associations across different age groups. 
Such associations are predominant.
This suggests that causal associations based on the outcomes quantifying earning numbers are primarily present within a given salary group or between two adjacent salary groups. The number of within-age edges is very limited.

For a detailed inference, Table~\ref{tab:salres} shows the structured hamming distances between different salary groups based on the summarized number of edges. 
It is defined as the minimum number of single operations, such as deletions, insertions, and re-orientations, needed to transform one DAG into another DAG. We compute these distances using the {\tt shd} function from the package {\tt pcalg} \citep{pcalgR}. The Structured Hamming distance (SHD) provides a measure of changes in the causal associations among the age groups as we move along the different salary groups.  
Table~\ref{tab:ageres}  shows the same for the age groups, measuring how the causal connections between the salary groups change from one age group to another.

The SHDs in Table~\ref{tab:salres} are larger than those in  Table~\ref{tab:ageres}, suggesting causal associations among different salary groups stay relatively stable across different ages. The SHDs in Table~\ref{tab:ageres} are often 0, specifically for higher age groups, meaning no differences in the estimated causal associations among the salary groups as they move from one age group to the other. Specifically, age groups 4 to 7 exhibit identical causal associations among the salary groups.

\subsection{Employment-Quantity (EarnS-EmpTotal)}
Labor economists are interested in studying the joint behavior or earnings and employment totals. 
We used the bivariate quarterly series of total employed in stable jobs and average monthly earnings from stable jobs. We extended the study of causal relations concerning earnings to the study of the causal structures of different age and salary groups concerning earnings and total employment. Figure~\ref{fig::real} shows the estimated graph of the 48 different age and salary group combinations. 
Interestingly, several path-like structures in the graph indicate a smooth monotone change of the causal structure within some age/salary groups. For example, in salary levels 5 and 6, the causal structure within the middle age groups (3-6) is a directed path. 

The SHD analysis of the nested groups of salary and age for the earnings and employment total together reveals a story similar to that for the earnings-only analysis. Notably, the causal structure among the salary groups remains stable across the middle age groups, ages 25--55. However, in the extreme age groups, below 20 years old or above 65 years old, the causal structure among the salary groups can shift significantly from that for the middle age groups.

\begin{table}[htbp]
\caption{Variables considered as multi-task}
\centering
{\resizebox{.9\textwidth}{!}{\begin{tabular}{ccl}
  \hline
 & Variable & Explanation \\ 
  \hline
1 & EmpS & Estimate of stable jobs - the number of jobs that are held on both the first and last day of the quarter with the same employer \\ 
  2 & EarnS & Average monthly earnings of employees with stable jobs \\ 
   \hline
\end{tabular}}}
\label{tab:response}
\end{table}

\begin{figure}[htbp]
\centering
\subfigure{\includegraphics[width = 0.6\textwidth]{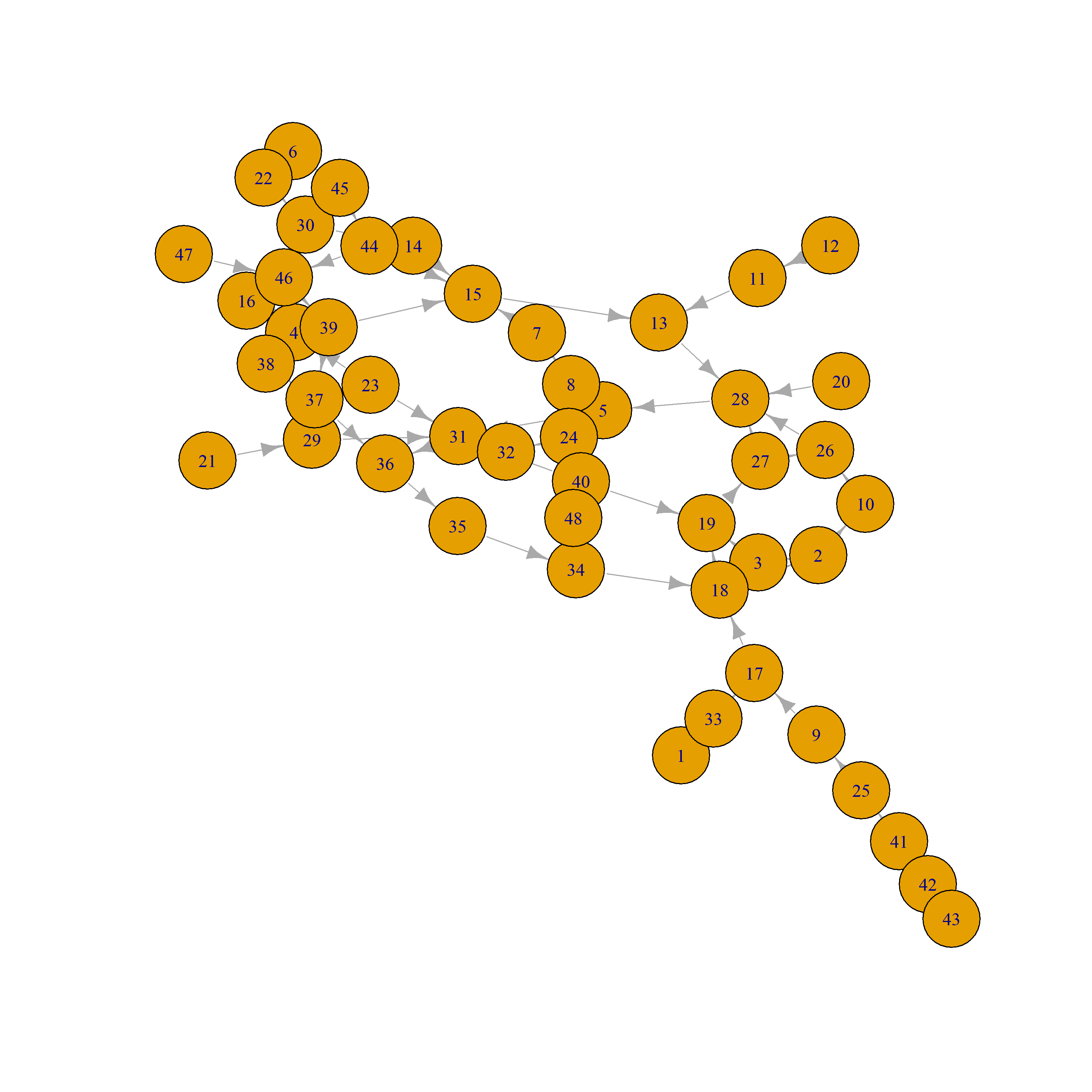}}
\subfigure{\includegraphics[width = 0.15\textwidth]{Figures/legend.png}}
\caption{Estimated directed acyclic graphical connections among different variables with different salary levels and age groups with Employment-Quantity.}
\label{fig::real}
\end{figure}

\begin{table}[ht]
\centering
\caption{Structured Hamming distance between different pairs of salary groups with Employment-Quantity}
\begin{tabular}{ccccccc}
  \hline
 & Sal level 1 & Sal level 2 & Sal level 3 & Sal level 4 & Sal level 5 & Sal level 6 \\ 
  \hline
Sal level 1 & 0 & 7 & 4 & 5 & 7 & 6 \\ 
  Sal level 2 & 7 & 0 & 6 & 7 & 4 & 8 \\ 
  Sal level 3 & 4 & 6 & 0 & 6 & 7 & 5 \\ 
  Sal level 4 & 5 & 7 & 6 & 0 & 7 & 8 \\ 
  Sal level 5 & 7 & 4 & 7 & 7 & 0 & 6 \\ 
  Sal level 6 & 6 & 8 & 5 & 8 & 6 & 0 \\ 
   \hline
\end{tabular}
\end{table}

\begin{table}[ht]
\centering
\caption{Structured Hamming distance between different pairs of age groups with Employment-Quantity}
{\resizebox{.8\textwidth}{!}{\begin{tabular}{ccccccccc}
  \hline
 & Age group 1 & Age group 2 & Age group 3 & Age group 4 & Age group 5 & Age group 6 & Age group 7 & Age group 8 \\ 
  \hline
Age group 1 & 0 & 6 & 7 & 6 & 7 & 6 & 8 & 8 \\ 
  Age group 2 & 6 & 0 & 5 & 5 & 6 & 6 & 5 & 6 \\ 
  Age group 3 & 7 & 5 & 0 & 1 & 2 & 2 & 3 & 2 \\ 
  Age group 4 & 6 & 5 & 1 & 0 & 1 & 3 & 2 & 3 \\ 
  Age group 5 & 7 & 6 & 2 & 1 & 0 & 4 & 3 & 4 \\ 
  Age group 6 & 6 & 6 & 2 & 3 & 4 & 0 & 5 & 4 \\ 
  Age group 7 & 8 & 5 & 3 & 2 & 3 & 5 & 0 & 5 \\ 
  Age group 8 & 8 & 6 & 2 & 3 & 4 & 4 & 5 & 0 \\ 
   \hline
\end{tabular}}}
\end{table}

\section{Discussion}
This paper proposed a projection-posterior-based Bayesian method for DAG estimation in a multivariate time-series setting. The projection, however, is generally applicable for running any Bayesian DAG estimation task.
Our approach further maintains stationarity; thus, the estimated DAG structure in the multivariate data is held at any given time.
We establish the posterior convergence properties of the proposed method and, in that, further establish two identifiability results for the unrestricted structural equation model.
A simulation and a QWI dataset illustrate the proposed method's performance. 
The estimated DAG structure allowed us to infer that earnings and total employment numbers change across different age groups and salary levels.

The proposed semiparametric time-series model for the univariate error processes can be modified to parametric models such as AR, MA, or even ARMA. This may allow us to further infer the temporal dynamics. Another future direction could be extending the posterior-projection approach to covariate-dependent DAG estimation.
Such models will be helpful for inferring covariate dependence in the DAG structure.
The temporal dynamics in the error processes can even be fully nonparametric, which is useful in several application domains such as neuroimaging, financial markets, etc. 

\section*{Appendix}
\label{sec:appendix}

 \begin{lem}[Identifiability with known $\bD$]
   \label{lem:uniqueness}
    If $(\bW_0,\bD_0, f_{0,j}: j=1,\ldots,p)$ and $(\bW_1,\bD_0, f_{1,j}: j=1,\ldots,p)$ generate the same distribution on $\bY_t$, with $\bW_0$ strictly lower triangular and $\bW_1$ having all diagonal entries zero, then $\bW_0 = \bW_1$.   
\end{lem}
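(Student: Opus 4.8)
The plan is to reduce the distributional hypothesis to a linear-algebraic identity and then exploit the triangular/zero-diagonal structure. Write $\bA_0=\bI-\bW_0$ and $\bA_1=\bI-\bW_1$, so that $\bA_0$ is lower triangular with unit diagonal (since $\bW_0$ is strictly lower triangular) while $\bA_1$ merely has unit diagonal (since $\bW_1$ has zero diagonal). Because the two parameter triples induce the same law for the stationary process $\{\bY_t\}$, they share the lag-$0$ autocovariance $\bSigma=\cov(\bY_t)$. From $(\bI-\bW)\bY_t=\bD^{1/2}\bZ_t$ and the modeling constraint that the components of $\bZ_t$ are independent with unit variance (equivalently $\int f=1$, so $\cov(\bZ_t)=\bI$), I get $\bSigma=\bA^{-1}\bD\{\bA\trans\}^{-1}$ in each parametrization. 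Equating and using the common $\bD_0$ yields $\bA_0^{-1}\bD_0\{\bA_0\trans\}^{-1}=\bA_1^{-1}\bD_0\{\bA_1\trans\}^{-1}$. (One could instead equate the full spectral density matrices and integrate, but the marginal covariance already suffices here.)

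Next I would set $\bM=\bA_1\bA_0^{-1}$ and rearrange the covariance identity into $\bM\bD_0\bM\trans=\bD_0$, so that $\bU:=\bD_0^{-1/2}\bM\bD_0^{1/2}$ is orthogonal; this is exactly the step where the hypothesis that $\bD$ is the \emph{same known} matrix in both models is indispensable. Conjugation by $\bD_0^{1/2}$ preserves both lower-triangularity and the unit diagonal, so writing $\widetilde{\bA}_0=\bD_0^{-1/2}\bA_0\bD_0^{1/2}$ (lower triangular, unit diagonal) and $\widetilde{\bA}_1=\bD_0^{-1/2}\bA_1\bD_0^{1/2}$ (unit diagonal) gives the clean relation $\widetilde{\bA}_1=\bU\widetilde{\bA}_0$ with $\bU$ orthogonal. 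The target reduces to showing $\bU=\bI$.

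The crux, and the step I expect to demand the most care, is precisely this claim: an orthogonal $\bU$ for which $\bU\widetilde{\bA}_0$ has unit diagonal, with $\widetilde{\bA}_0$ lower triangular and unit-diagonal, must be the identity. I would argue by induction on $p$. Reading off the $(p,p)$ entry of $\widetilde{\bA}_1=\bU\widetilde{\bA}_0$ and using that the last column of the lower-triangular $\widetilde{\bA}_0$ equals $\bm{e}_p$ gives $(\widetilde{\bA}_1)_{pp}=\bU_{pp}$, whence $\bU_{pp}=1$. Since $\bU$ is orthogonal and its rows and columns are unit vectors, a diagonal entry equal to $1$ forces the rest of row $p$ and column $p$ to vanish, so $\bU$ is block diagonal with an orthogonal $(p-1)\times(p-1)$ block $\bU'$ and a trailing $1$. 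Because column $p$ of $\bU$ is $\bm{e}_p$, the leading $(p-1)\times(p-1)$ block of the product equals $\bU'\widetilde{\bA}_0'$, where $\widetilde{\bA}_0'$ is the lower-triangular unit-diagonal leading block of $\widetilde{\bA}_0$, and this block inherits the unit diagonal from $\widetilde{\bA}_1$. The induction hypothesis then gives $\bU'=\bI_{p-1}$, and the trivial case $p=1$ closes the induction.

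Finally, $\bU=\bI$ gives $\bM=\bI$, hence $\bA_1=\bA_0$ and therefore $\bW_1=\bW_0$, which is the assertion. It is worth flagging which hypotheses are load-bearing: strict lower-triangularity of $\bW_0$ supplies the triangular factor $\widetilde{\bA}_0$, the zero diagonal of $\bW_1$ supplies the unit diagonal of $\widetilde{\bA}_1$ that both pins $\bU_{pp}=1$ and drives the recursion, and the shared known $\bD_0$ is exactly what lets me package the residual ambiguity into a single orthogonal matrix rather than a general invertible one. The latter is what necessitates the separate linear-independence-of-spectral-densities argument in the unknown-$\bD$ case of Theorem~\ref{contraction rate}(ii).
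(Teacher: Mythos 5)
Your proof is correct and takes essentially the same route as the paper's: both reduce the distributional hypothesis to equality of the second-order (lag-0 covariance/precision) structure, express the residual ambiguity of the factorization as an orthogonal matrix, and force that matrix to be the identity through a recursive diagonal-entry argument that combines orthogonality (unit-norm rows and columns) with the strict lower-triangularity of $\bW_0$ and the zero diagonal of $\bW_1$. The only cosmetic differences are that you place the orthogonal factor on the left after conjugating by $\bD_0^{1/2}$ and induct from the $(p,p)$ corner, whereas the paper places it on the right and recurses from the $(1,1)$ corner.
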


\begin{proof}
    Let $\tilde \bY_t$ stand for the Fourier transform of $\bY_t$ at the $t$-th Fourier frequency. Under the regime $(\bW_0,\bD_0, f_{0,j}: j=1,\ldots,p)$, the component series of $(\bI-\bW_0)\tilde{\bY}_t$ has covariance matrix $\bD_0$ for any $t$. 
    Further, $(\bI-\bW_{0})\trans\bD_0^{-1/2}$ is the Cholesky factor for the precision matrix $\bOmega_0$ of $\bY_t$. Thus, by the uniqueness of the Cholesky factor, $\bD_0$ is uniquely identified from $\bOmega_0$. 

Now assume that $\bD_0=\bD=\Diag(d_1,\ldots,d_p)$, say. 
    From the equality of the precision matrices of the series under the two regimes $(\bW_0,\bD, f_{0,j}: j=1,\ldots,p)$ and $(\bW_1,\bD, f_{1,j}: j=1,\ldots,p)$ that generate the same distribution, we have $(\bI-\bW_{1})\bD^{-1}(\bI-\bW_{1})\trans=(\bI-\bW_{0})\bD^{-1}(\bI-\bW_{0})\trans$, the matrix $\bW_{1}$ must satisfy the relation $(\bI-\bW_{1})\bD^{-1/2}=(\bI-\bW_{0})\bD^{-1/2}\bP$ for some orthogonal matrix $\bP$. We then recursively show that  $\bP$ must be the identity matrix.

    Observe that $\bD^{-1/2}-\bW_{1}\bD^{-1/2}=\bD^{-1/2}\bP-\bW_{0}\bD^{-1/2}\bP$.
    By the construction, diagonal entries of $\bW_{1}\bD^{-1/2}$ are all zero and $\bW_{0}\bD^{-1/2}$ is strictly lower triangular. Thus, the first diagonal entry of $\bW_{0}\bD^{-1/2}\bP$ is zero.
    Then, comparing the first diagonal entry on both sides, we have $1/\sqrt{d_{1}} =  P_{1,1}/\sqrt{d_{1}}$ so that $P_{1,1}=1$. Since the first row $(P_{1,i}: i=1,\ldots,p)$ and the first column $(P_{i,1}: i=1,\ldots,p)\trans$ both must have unit norm by the orthogonality of $\bP$, we have     
     $P_{1,i}=P_{i,1}=0$ for all $i=2,\ldots,p$. 
     
     We repeat the process with the second diagonal entry under and obtain $P_{2,2}=1$, $P_{2,i}=P_{i,2}=0$ for all $i=1,3,4,\ldots,p$. Continuing the arguments for every diagonal entry, we obtain $\bP=\bI$. 
\end{proof}

Next, we extend the above argument to show that the model parameters are continuously identifiable if $\bD$ is known and equal to a given value $\bD_0$. 

 \begin{lem}
   \label{lem:soundness}
    Let $(\bW_0,\bD_0, f_{0,j}: j=1,\ldots,p)$ and $(\bW,\bD_0, f_{j}: j=1,\ldots,p)$ be two sets of parameters with the same diagonal matrix, where $\bW_0$ is strictly lower triangular and $\bW$ has all diagonal entries zero. If the  
    precision matrices for process $\bY_t$ the generate are $\bOmega_0$ and $\bOmega$ respectively, then 
    \begin{align}
        \|\bW-\bW_0\|_\mathrm{F} \lesssim \sqrt{p}\|\bOmega-\bOmega_0\|_\mathrm{F}^{1/2}, 
    \end{align}
    where $\bOmega^{1/2}$ and $\bOmega_0^{1/2}$ are positive definite square roots of $\bOmega$ and $\bOmega_0$ respectively, and the implied constant of proportionality in `$\lesssim$' may depend on $(\bW_0,\bD_0, f_{0,j}: j=1,\ldots,p)$. 
\end{lem}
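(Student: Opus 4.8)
The plan is to pass to the square-root factors of the two precision matrices and to split the recovery of $\bW$ into a well-conditioned part and a degenerate part, the latter being responsible for the square-root exponent. Write $\bM=(\bI-\bW)\bD_0^{-1/2}$ and $\bM_0=(\bI-\bW_0)\bD_0^{-1/2}$, so that $\bM\bM\trans=\bOmega$, $\bM_0\bM_0\trans=\bOmega_0$, both $\bM$ and $\bM_0$ have diagonal $\bD_0^{-1/2}$ (because $\bW$ and $\bW_0$ have zero diagonal), and $\bM_0$ is lower triangular (because $\bW_0$ is strictly lower triangular), hence is the Cholesky factor of $\bOmega_0$. Since $\bW-\bW_0=-(\bM-\bM_0)\bD_0^{1/2}$ and the entries of $\bD_0$ are bounded above and below by (A1), it suffices to bound $\Frob{\bM-\bM_0}$ by $\sqrt p\,\Frob{\bOmega-\bOmega_0}^{1/2}$.

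First I would control the triangular, well-conditioned part. Let $\widetilde{\bM}$ denote the lower-triangular Cholesky factor of $\bOmega$. On the set of positive definite matrices whose spectrum lies in a fixed compact subinterval of $(0,\infty)$ --- guaranteed here by (P2) --- the Cholesky map is Lipschitz, so $\Frob{\widetilde{\bM}-\bM_0}\lesssim\Frob{\bOmega-\bOmega_0}$. Because $\bM\bM\trans=\bOmega=\widetilde{\bM}\widetilde{\bM}\trans$, the two factors differ by an orthogonal matrix on the right, $\bM=\widetilde{\bM}\bQ$ with $\bQ$ orthogonal, and thus $\Frob{\bM-\bM_0}\le\Frob{\widetilde{\bM}-\bM_0}+\op{\bM_0}\,\Frob{\bQ-\bI}$. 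The first term is $O(\Frob{\bOmega-\bOmega_0})$, so everything reduces to bounding $\Frob{\bQ-\bI}$.

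The orthogonal factor $\bQ$ is pinned down by the zero-diagonal constraint, which reads $\mathrm{diag}(\widetilde{\bM}\bQ)=\bD_0^{-1/2}$. When $\bOmega=\bOmega_0$ we have $\widetilde{\bM}=\bM_0$ and Lemma~\ref{lem:uniqueness} forces $\bQ=\bI$; the content here is the quantitative version. The key point is that the diagonal-matching map is degenerate at $\bQ=\bI$: for an orthogonal matrix, $Q_{ii}-1$ is second order in the off-diagonal (rotation) entries, so the diagonal constraints control $\bQ$ only at the square-root scale. Equivalently, in terms of $\bW$ this says that the entries of $\bW$ lying strictly above the diagonal --- all of which vanish in $\bW_0$ --- sit at the degenerate directions of the map $\bW\mapsto\bOmega$ and are therefore recovered at the H\"older-$\tfrac12$ rate, while the strictly-lower entries are recovered Lipschitz-ly (already visible in the two-dimensional computation, where an above-diagonal coordinate $w$ enters $\bOmega$ only through $w^2$). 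Running the recursion of Lemma~\ref{lem:uniqueness} with the perturbed diagonal and aggregating the $p$ diagonal deficiencies $1-Q_{ii}$ against $\Frob{\bOmega-\bOmega_0}$ through orthonormality of the rows and columns of $\bQ$ should yield $\Frob{\bQ-\bI}^2\lesssim p\,\Frob{\bOmega-\bOmega_0}$, i.e. $\Frob{\bQ-\bI}\lesssim\sqrt p\,\Frob{\bOmega-\bOmega_0}^{1/2}$.

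Combining the two contributions gives $\Frob{\bM-\bM_0}\lesssim\Frob{\bOmega-\bOmega_0}+\sqrt p\,\Frob{\bOmega-\bOmega_0}^{1/2}\lesssim\sqrt p\,\Frob{\bOmega-\bOmega_0}^{1/2}$ (the linear term being lower order for small perturbations), and hence the claimed bound for $\Frob{\bW-\bW_0}$. The main obstacle is exactly the quantitative step for $\bQ$: a naive sequential back-substitution through the recursion of Lemma~\ref{lem:uniqueness} lets the exponent degrade from row to row, so the off-diagonal rotation energy must instead be bounded \emph{globally} from the summed diagonal deficiency together with the unit-norm constraints on the rows and columns of $\bQ$, which is what keeps the exponent at $\tfrac12$. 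An equivalent and perhaps cleaner route bypasses $\bQ$ and works with the symmetric square roots directly, via the operator-H\"older inequality $\Frob{\bOmega^{1/2}-\bOmega_0^{1/2}}\lesssim\Frob{\bOmega-\bOmega_0}^{1/2}$ (valid since the spectra are bounded away from $0$ and $\infty$ by (P2)), matching the $\bOmega^{1/2}$ notation in the statement; the degenerate-direction bookkeeping for the orthogonal ambiguity is still required and remains the crux.
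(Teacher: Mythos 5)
Your argument is, in substance, the same as the paper's. The paper writes $\bI-\bW=\bD_0^{1/2}\bOmega^{1/2}\bP$ and $\bI-\bW_0=\bD_0^{1/2}\bOmega_0^{1/2}\bP_0$ with $\bP,\bP_0$ orthogonal, trades $\Frob{\bOmega^{1/2}-\bOmega_0^{1/2}}$ for $\Frob{\bOmega-\bOmega_0}$ using the bounded spectra (where you use the Cholesky factor and its local Lipschitz property --- an equivalent move), and then reduces everything to showing $\Frob{\tilde\bP-\bI}\lesssim\sqrt{p}\,\Frob{\bOmega-\bOmega_0}^{1/2}$ for $\tilde\bP=\bP_0\trans\bP$, which it attempts by exactly the sequential diagonal-entry recursion of Lemma~\ref{lem:uniqueness} that you describe and then distrust. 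Your distrust is warranted, and your diagnosis is sharper than the paper's own bookkeeping: at row $i\ge 2$ the comparison only gives $|1-\tilde P_{ii}|\lesssim\sqrt\eta$, and orthonormality then gives $\sum_{j\ne i}\tilde P_{ij}^2=1-\tilde P_{ii}^2\le 2|1-\tilde P_{ii}|\lesssim\sqrt\eta$, not $\lesssim\eta$ as the paper asserts; fed into the next row this degrades to $\eta^{1/4},\eta^{1/8},\ldots$, exactly the cascade you point out.

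The genuine gap is that your replacement for the recursion --- bounding the rotation energy ``globally from the summed diagonal deficiency'' --- is never carried out, and it does not go through. Summing the $p$ diagonal constraints and applying Cauchy--Schwarz to the cross terms gives, with $\Sigma=\sum_i(1-\tilde P_{ii})\ge 0$, only an inequality of the form $\Sigma\lesssim\Frob{\bW_0\bD_0^{-1/2}}\sqrt{\Sigma}+\sqrt{p}\,\eta$, because the off-diagonal energy entering the cross terms is itself bounded only by $2\Sigma$; this self-referential bound yields $\Sigma\lesssim\max(1,\sqrt{p}\,\eta)$ and gives no decay as $\eta\to 0$. Worse, the failure is not an artifact of clumsy aggregation: the diagonal constraints plus orthogonality are genuinely insufficient when $\bW_0$ is sparse, which is the paper's standing assumption. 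A rotation in the $(i,j)$ plane, $i<j$, enters the diagonal constraints at first order only through the coefficient $W_{0,ji}$; if $W_{0,ji}=0$ it is invisible at first order, and its second-order footprint can be cancelled by first-order tilts in other planes. Concretely, with $p=3$, $\bD_0=\bI$, $W_{0,21}\ne 0$, $W_{0,31}\ne 0$, $W_{0,32}=0$, choosing the $(2,3)$ angle $c$ and then $(1,2)$ and $(1,3)$ angles of order $c^2$ to cancel the induced $c^2/2$ deficiencies produces an orthogonal $\bQ$ with all diagonal deficiencies $O(c^4)$ but $\Frob{\bQ-\bI}\asymp|c|$, i.e.\ deficiency $\eta$ with $\Frob{\bQ-\bI}\asymp\eta^{1/4}$; setting $\bI-\bW=(\bI-\bW_0)\bQ+\bDelta$ with the $O(c^4)$ diagonal fix-up $\bDelta$ then gives a zero-diagonal $\bW$ with $\Frob{\bOmega-\bOmega_0}=O(\eta)$ and $\Frob{\bW-\bW_0}\asymp\eta^{1/4}$, which is incompatible with the claimed $\sqrt{p}\,\eta^{1/2}$ rate for small $\eta$. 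So the crux you flagged cannot be closed by any rearrangement of the diagonal constraints alone; it requires either additional hypotheses on the zero pattern of $\bW_0$ or information beyond the diagonal identity. The same criticism applies to the paper's own proof, which simply asserts the bounds that fail here; but since your write-up leaves precisely this step as a hope rather than an argument, it is a genuine gap in the proposal.
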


\begin{proof}
Let $\eta=\|\bOmega^{1/2}-\bOmega_0^{1/2}\|_\infty$. We may write $\bI-\bW =\bD_0^{1/2} \bOmega^{1/2}\bP$ and $\bI-\bW_0=\bD_0^{1/2}\bOmega_0^{1/2}\bP_0$ for some orthogonal matrices $\bP$ and $\bP_0$. Then 
\begin{align}
\|\bW-\bW_0\|_{\mathrm{F}}\lesssim &\Frob{\bOmega^{1/2}\bP-\bOmega_0^{1/2}\bP_0} \nonumber \\ 
&\le \op{\bP} \Frob{\bOmega^{1/2}-\bOmega_0^{1/2}}+\op{\bOmega_0}^{1/2} \Frob{\bP-\bP_0}\nonumber\\
&\lesssim \Frob{\bOmega-\bOmega_0}+ \Frob{\bP-\bP_0},
\end{align}
using the facts that $\op{\bP}=1$ by its orthogonality and $\Frob{\bOmega^{1/2}-\bOmega_0^{1/2}}$ and $\Frob{\bOmega-\bOmega_0}$ are of the same order. 
The last assertion holds because $\bOmega_0$ has all eigenvalues $\lambda_{0,1},\ldots,\lambda_{0,p}$ bounded between two fixed positive numbers, so for any $\bOmega$ with $\Frob{\bOmega-\bOmega_0}$ sufficiently small, the corresponding eigenvalues $\lambda_{1},\ldots,\lambda_{p}$ are also bounded between two fixed positive numbers and 
$$\Frob{\bOmega^{1/2}-\bOmega_0^{1/2}}^2=\sum_{j=1}^p (\sqrt{\lambda_i}-\sqrt{\lambda_{0,i}})^2 \asymp 
\sum_{j=1}^p (\lambda_i-\lambda_{0,i})^2=\Frob{\bOmega-\bOmega_0}^2.$$ 
Thus, it remains to show that $\Frob{\bP-\bP_0} \lesssim \sqrt{p} \Frob{\bOmega-\bOmega_0}^{1/2}$. 

Write $\tilde{\bP}=\bP_0\trans\bP=(\!(\tilde{P}_{ij})\!)$. Then 
$(\bI-\bW_0)\trans \bD_0^{-1/2}\tilde{\bP}=\bOmega_0^{1/2}\bP$ while $(\bI-\bW)\trans \bD_0^{-1/2}=\bOmega^{1/2}\bP$. Therefore, it follows that  
\begin{align}
\label{transfer identity}
\bD_0^{-1/2}-\bW \bD_0^{-1/2}=\bD_0^{-1/2}\tilde{\bP}-\bW_0 \bD_0^{-1/2}\tilde{\bP}+\bDelta,
\end{align}
where $\Frob{\bDelta}=\Frob{\bOmega^{1/2}-\bOmega^{1/2}}=\eta$, say. Since the first diagonal entry of $\bW$ is zero by definition, by comparing the $(1,1)$th entry of both sides, it follows from \eqref{transfer identity} that $|d_1^{-1/2} \tilde{P}_{11} -d_1^{-1/2}|\le \eta$. We can then get the final bound in terms of Frobeneous distance on $\bDelta$ directly; here we used the fact that the entire first row of $\bW_0$ consists of zero. Thus $|1-\tilde{P}_{11}|\le C\eta$ for some constant $C>0$. Now, as the rows and columns of $\tilde{P}$ have unit norm, it also follows that $\sum_{j\ne 1} \tilde{P}_{1j}^2\lesssim \eta$ and $\sum_{j\ne 1} \tilde{P}_{j1}^2\lesssim \eta$. 

Next, we look at both sides' $(2,2)$th entry. Owing to the fact the entries of $\bW_0$ are bounded in magnitude, and $|\tilde{P}_{12}|\lesssim \sqrt{\eta}$, it follows that 
$ |d_1^{-1/2} W_{0,21}\tilde{P}_{12} + d_2^{-1/2} \cdot 1\cdot \tilde{P}_{22}-d_2^{-1/2} |\le \eta $
implying that $|1-\tilde{P}_{22}|\le \eta +C\sqrt{\eta}$, and using the unit norm property of the second row and second column, it follows that $\sum_{j\ne 2} \tilde{P}_{2j}^2\lesssim \eta$ and $\sum_{j\ne 2} \tilde{P}_{j2}^2\lesssim \eta$. 

Continuing this way and noting that at most $p$ entries can be there in any row, using the Cauchy-Schwarz inequality, it follows for all $i=1,\ldots,p$, $|1-\tilde{P}_{i,i}|\le \eta+C\sqrt{\eta}$ and $\sum_{j\ne i} \tilde{P}_{ij}^2\lesssim \eta$. Therefore, $\Frob{\bI-\tilde{\bP}}^2\lesssim p(\eta+C\sqrt{\eta})^2 +p\eta$. Hence $\Frob{\bP-\bP_0}\lesssim \sqrt{p\eta}$, establishing the assertion.

\end{proof}


\begin{lem}[Identifiability with unknown $\bD$]
\label{lem:identifiability without}
    Let $(\bW_0,\bD_0, f_{0,j}: j=1,\ldots,p)$ and $(\bW_1,\bD_1, f_{1,j}:j=1,\ldots,p)$ lead to the same distributions for the process $(\bY_t: t=1,2,\ldots,T)$ for all $T=1,2,\ldots$, where all diagonal elements of $\bW_1$ are zero and $\bW_0$ is a strictly lower triangular matrix. If $\{f_{k,j}:j=1,\ldots,p\}$ are linearly independent sets of functions for $k=0,1$, then $\bW_0=\bW_1$, $\bD_0=\bD_1$ and $f_{0,j}=f_{1,j}$ for all $j=1,\ldots,p$. 
\end{lem}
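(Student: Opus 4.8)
The plan is to pass to the spectral domain and turn the assumption that both parameter triples generate the same law for $(\bY_t)$ into an equality of spectral density matrices at every frequency (``same distribution for all $T$'' yields equal autocovariances, hence equal spectral density matrices at each $\omega$). Since $\bY_t=(\bI-\bW_k)^{-1}\bD_k^{1/2}\bZ_t$ under regime $k\in\{0,1\}$, with the components of $\bZ_t$ independent and having spectral densities $f_{k,j}$, the spectral density matrix of $\bY_t$ is $(\bI-\bW_k)^{-1}\bG_k(\omega)\{(\bI-\bW_k)\trans\}^{-1}$, where $\bG_k(\omega)=\Diag(d_{k,1}f_{k,1}(\omega),\ldots,d_{k,p}f_{k,p}(\omega))$. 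Equating the two regimes and setting $\bM=(\bI-\bW_1)(\bI-\bW_0)^{-1}$, I would arrive at the key identity
\begin{align*}
\bM\,\bG_0(\omega)\,\bM\trans=\bG_1(\omega)\qquad\text{for all }\omega.
\end{align*}

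Next I would read off the off-diagonal entries: for $a\neq b$ the $(a,b)$ entry of the (diagonal) right-hand side vanishes, giving $\sum_j m_{aj}m_{bj}\,d_{0,j}f_{0,j}(\omega)=0$ for all $\omega$. Because $\{f_{0,j}\}$ is linearly independent and each $d_{0,j}>0$, the family $\{d_{0,j}f_{0,j}\}$ is linearly independent too, forcing $m_{aj}m_{bj}=0$ for every $j$ and every $a\neq b$. Hence each column of $\bM$ has at most one nonzero entry; since $\bM$ is invertible it is in fact a monomial matrix, so there are a permutation $\sigma$ and nonzero scalars $\lambda_a$ with $m_{a,\sigma(a)}=\lambda_a$ and the rest of row $a$ zero. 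Comparing diagonal entries then gives $d_{1,a}f_{1,a}(\omega)=\lambda_a^2\,d_{0,\sigma(a)}f_{0,\sigma(a)}(\omega)$; integrating over $[-\pi,\pi]$ and using that all $f_{k,j}$ integrate to the same constant (the unit-variance normalization) separates scale from shape, yielding $d_{1,a}=\lambda_a^2 d_{0,\sigma(a)}$ and $f_{1,a}=f_{0,\sigma(a)}$.

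It remains to pin down $\sigma$ and the $\lambda_a$, which is where the DAG constraints enter. Writing $\bI-\bW_1=\bM(\bI-\bW_0)$, row $a$ of the right-hand side equals $\lambda_a$ times row $\sigma(a)$ of $\bI-\bW_0$, so the diagonal relation reads $1=\lambda_a\,(\bI-\bW_0)_{\sigma(a),a}$. Since $\bW_0$ is strictly lower triangular, $\bI-\bW_0$ is lower triangular with unit diagonal, so $(\bI-\bW_0)_{\sigma(a),a}=0$ whenever $\sigma(a)<a$; this rules out $\sigma(a)<a$ and forces $\sigma(a)\ge a$ for every $a$. A permutation with $\sigma(a)\ge a$ for all $a$ must be the identity (downward induction from $a=p$), whence $(\bI-\bW_0)_{a,a}=1$ gives $\lambda_a=1$ and so $\bM=\bI$. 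Substituting back yields $\bW_1=\bW_0$, then $\bD_1=\bD_0$ from $d_{1,a}=\lambda_a^2 d_{0,a}$, and finally $f_{1,j}=f_{0,j}$.

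The step I expect to be the crux — and the one that genuinely exploits the time-series structure — is the passage from the off-diagonal identity to monomiality of $\bM$: linear independence of the spectral densities buys a rigidity that is unavailable in the single-snapshot, known-$\bD$ setting of Lemma~\ref{lem:uniqueness}, where the analogous transfer matrix can only be shown to be orthogonal. Two smaller points that will need care are justifying that ``same distribution for all $T$'' really delivers equality of the full spectral density matrix at every $\omega$, and handling the normalization of the $f_{k,j}$ cleanly so that the diagonal comparison splits neatly into a statement about $\bD$ and a statement about the densities.
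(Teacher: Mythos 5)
Your proposal is correct and follows essentially the same route as the paper's proof: pass to the spectral domain to get the identity that $\bM\,\bG_0(\omega)\,\bM\trans$ is diagonal for all $\omega$, use linear independence of the spectral densities on the off-diagonal entries to force $\bM$ to be a monomial matrix, and then use the strict lower triangularity of $\bW_0$ together with the zero diagonal of $\bW_1$ to force the permutation to be the identity and the scalars to be one. Your write-up is in fact somewhat more explicit than the paper's at the final step (the downward induction showing $\sigma=\mathrm{id}$, and separating $\bD$ from the $f$'s via the unit-variance normalization, where the paper instead appeals to equality of the latent process distributions), but these are presentational refinements of the same argument.
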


\begin{proof}
By the equality of the stationary marginal distribution of $\bY_t$, we have $(\bI-\bW_1)\bD_1(\bI-\bW_1)\trans=(\bI-\bW_0)\bD_0(\bI-\bW_0)\trans$. 

Let $\tilde{\bY}_t$ be the Fourier transformation of the original time series at the $t$th frequency. 
    Under the regime $(\bW_0,\bD_0, f_{0,j}: j=1,\ldots,p)$, the components of the time series $(\bI-\bW_0)\tilde{\bY}_t$ are independent. By the assumption, the covariance matrix of 
    $(\bI-\bW_1)\tilde{\bY}_t=(\bI-\bW_1)(\bI-\bW_0)^{-1}(\bI-\bW_0)\tilde{\bY}_t$ is diagonal. 

    Let $\bA=(\bI-\bW_1)(\bI-\bW_0)^{-1}=(\!(a_{j,k})\!)$, which is non-sigular. Then  $\bA\bD_1\bS_{1}\bD_{1}\bA\trans=\bD_{0}\bS_{0}\bD_{0}$ is a diagonal matrix, where $\bS_1$ and $\bS_0$ are the diagonal covariance matrices of the vector of Fourier transforms at all Fourier frequencies corresponding to $T$ temporal observations, as described by \eqref{eq:st}, respectively under the distributional regimes of $(\bW_1,\bD_1, f_{1,j}: j=1,\ldots,p)$ and $(\bW_0,\bD_0, f_{0,j}: j=1,\ldots,p)$. Hence, for all $T=1,2,\ldots$, and all Fourier frequencies $\omega_{m,T}$, $m=1,\ldots,T$, 
    $$\sum_{j=1}^p d_{1,j}f_{1,j}(\omega_{m,T}) a_{j,\ell}a_{j,k} = 0\mbox{ for all }\ell\neq k.$$ 
    Since the set of all Fourier frequencies corresponding to all $T$ is dense and spectral densities are continuous, $\sum_{j=1}^p d_{1,j}f_{1,j}(\omega_{m,T}) a_{j,\ell}a_{j,k} = 0$ for all $\ell\neq k$ and all $\omega_{m,T}$. The assumed linear independence of the spectral density functions now implies that $a_{j,\ell}a_{j,k}=0$ for all  $\ell\neq k$. Comparing the diagonal entries, we obtain $\sum_{j=1}^p d^2_{1,j}f_{1,j}(\omega_{m,T})a^2_{j,\ell}=d^2_{0,\ell}f_{0,j}(\omega_{m,T}) \neq 0$ for all $\ell=1,\ldots,p$. Hence for any row $j=1,\ldots,p$, exactly one entry $a_{j,\ell}$ is non-zero, say at $\ell=\ell_j$. We note that $\ell_j$'s are different for different $j$, since otherwise at least one column of $\bA$ will be entirely zero, rendering $\bA$ to be singular. Hence $\bA$ is a diagonal matrix up to a row-permutation. 
    
Let $\bP$ be a permutation matrix such that $\bP(\bI-\bW_1)(\bI-\bW_0)^{-1}=\bD$ 
is a diagonal matrix. Then $\bP(\bI-\bW_1)=\bD(\bI-\bW_0)$ is lower triangular. Since the diagonal entries of $\bW_1$ are all zero, we must have $\bP=\bI$. Hence $\bA=\bD$ is diagonal and $(\bI-\bW_1)=\bD(\bI-\bW_0)$. A comparison of diagonal entries now gives $\bD=\bI$ since $\bW_1$ and $\bW_0$ have zero diagonal entries. Thus $\bW_1=\bW_0$, and hence $\bD_1=\bD_0$. Therefore, the latent process $(\bZ_t: t=1,\ldots,T)$ has the same distribution for all $T$ under both distributional regimes, implying that they have the sets of spectral densities  $(f_{1,j}: j=1,\ldots,p)$ and $(f_{0,j}: j=1,\ldots,p)$ are identical. 
\end{proof}

The following result gives a characterization of the condition of linear independence of the spectral densities of the components of the latent process in terms of the linear independence of the spectral densities of the components of the observed process, which is empirically testable. Let $\odot$ refer to the Hadamard (entry-wise) product of matrices of equal sizes. 

\begin{lem}
\label{lem:characterize}
Consider the distributional regime $(\bW,\bD, f_j: j=1,\ldots,p)$ for the observed time series $(\bY_t: t=1,\ldots,T)$, $T=1,2,\ldots$. 
    If $(\bI-\bW)^{-1}\odot(\bI-\bW)^{-1}$ is invertible, then the  component-wise marginal spectral densities of $(\bY_{t}:t=1,2,\ldots)$ are linearly independent if and only if $\{f_{1},\ldots,f_{p}\}$ are linearly independent.
\end{lem}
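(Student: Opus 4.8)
The plan is to express each component-wise marginal spectral density of the observed process $(\bY_t)$ as an explicit linear combination of the latent spectral densities $f_1,\ldots,f_p$, to identify the resulting coefficient matrix as $\bG\bD$ with $\bG=(\bI-\bW)^{-1}\odot(\bI-\bW)^{-1}$, and then to invoke the elementary fact that an invertible linear transformation preserves linear independence of a finite collection of functions.

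First I would compute the spectral density matrix of $\bY_t$. From the defining relation $(\bI-\bW)\bY_t=\bD^{1/2}\bZ_t$ we get $\bY_t=(\bI-\bW)^{-1}\bD^{1/2}\bZ_t$, so with $\bM=(\bI-\bW)^{-1}\bD^{1/2}$ the observed series is a static linear mixture of the mutually independent latent series $Z_{1,t},\ldots,Z_{p,t}$. Since the latent components are independent with spectral densities $f_j$, their joint spectral density matrix is $\Diag(f_1(\omega),\ldots,f_p(\omega))$, and therefore the spectral density matrix of $\bY_t$ is $\bM\,\Diag(f_1(\omega),\ldots,f_p(\omega))\,\bM\trans$. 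Reading off the $(i,i)$ entry, and using that $\bD^{1/2}$ is diagonal, the $i$th marginal spectral density $g_i$ is
\begin{align*}
g_i(\omega)=\sum_{j=1}^p M_{ij}^2 f_j(\omega)=\sum_{j=1}^p [(\bI-\bW)^{-1}]_{ij}^2\, d_j\, f_j(\omega).
\end{align*}
The crucial point is that only the diagonal of the spectral density matrix enters the marginal densities; this is exactly why the entry-wise square $(\bI-\bW)^{-1}\odot(\bI-\bW)^{-1}$ arises rather than $(\bI-\bW)^{-1}$ itself.

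Stacking these identities, I would write $\bm{g}=\bG\bD\bm{f}$, where $\bm{g}=(g_1,\ldots,g_p)\trans$, $\bm{f}=(f_1,\ldots,f_p)\trans$, and $\bG=(\bI-\bW)^{-1}\odot(\bI-\bW)^{-1}$. Because $\bD$ is diagonal with strictly positive entries, $\bG\bD$ is invertible if and only if $\bG$ is. The conclusion is then purely algebraic: when $\bG$ (hence $\bG\bD$) is invertible, any relation $\sum_i c_i g_i=0$ forces $(\bG\bD)\trans\bc=\bzero$, so $\bc=\bzero$, which transfers linear independence from $\{f_j\}$ to $\{g_i\}$; conversely, inverting to $\bm{f}=(\bG\bD)^{-1}\bm{g}$ writes each $f_j$ as a linear combination of the $g_i$ and the same argument runs in reverse.

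I expect the only real obstacle to be the first step: correctly deriving the marginal spectral densities and recognizing that taking the diagonal of $\bM\,\Diag(f_j)\,\bM\trans$ produces the Hadamard square of $(\bI-\bW)^{-1}$ scaled by $\bD$. Once the coefficient matrix is identified as $\bG\bD$ and its invertibility linked to that of $\bG$, the equivalence of linear independence is the standard statement that an invertible matrix maps a linearly independent set of functions to a linearly independent set, and conversely.
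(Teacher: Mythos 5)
Your proposal is correct and follows essentially the same route as the paper: both derive the observed spectral density matrix as $(\bI-\bW)^{-1}\bD\,\Diag(f_1(\omega),\ldots,f_p(\omega))\,\bD\{(\bI-\bW)^{-1}\}\trans$ (the paper via autocovariances and Fourier transform, you via direct linear filtering), read off the diagonal to obtain the marginals as $\bigl((\bI-\bW)^{-1}\odot(\bI-\bW)^{-1}\bigr)\bD$ applied to $(f_1,\ldots,f_p)\trans$, and conclude by noting that this invertible matrix preserves linear independence in both directions.
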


\begin{proof}
Let $\bGamma(h)=\mathrm{E}(\bY_t \bY_{t+h}\trans)$ be the autocovariance function of the observed process $(\bY_t: t=1,2,\ldots)$ and $\bm{\Upsilon}(h)=\mathrm{E}(\bZ_t \bZ_{t+h}\trans)$ be the autocovariance function of the latent process $(\bZ_t: t=1,2,\ldots)$. Using the representation \eqref{eq:SEM}, it follows that $\bGamma(h)= \bA\bD^{1/2}\bm{\Upsilon}(h) \bD^{1/2}\bA\trans$, where $\bA=(\!(a_{j,k})\!)=(\bI-\bW)^{-1}$. Taking the Fourier transform on both sides, for all frequencies $\omega$, the joint spectral density matrix $\bLambda(\omega)$ of $(\bY_t: t=1,2,\ldots)$ is given by 
\begin{align*}
\bLambda(\omega)=\bA \bD \Diag(f_1(\omega),\ldots,f_p(\omega)) \bA\trans. 
\end{align*}
Hence, the marginal spectral density of $Y_j(t)$ is given by $\lambda_j(\omega)=\sum_{k=1} d_k a_{j,k}^2 f_k(\omega)$, $j=1,\ldots,p$, that is, 
$$(\lambda_1(\omega),\ldots,\lambda_p(\omega))\trans=(\bA\odot \bA) \bD(f_1(\omega),\ldots,f_p(\omega))\trans.$$ 
Since $\bA\odot \bA$ is assumed to be non-singular and $\bD$ has all entries positive, it follows that the set of functions $ \{ \lambda_1(\omega),\ldots,\lambda_p(\omega)\}$ is linearly independent if and only if $\{f_1(\omega),\ldots,f_p(\omega)\}$ is so. 
\end{proof}

\begin{proof}[Proof of Theorem~\ref{contraction rate}]
We prove the result in two parts. In the first part, using the general theory of posterior contraction, we show that the unrestricted posterior contracts at the rate $\epsilon_T$ described above. In the second part, using a concentration inequality of a projection posterior, we show that the contraction rate is inherited from the unrestricted posterior to the projection posterior. 

We can rewrite our model as $\bY_t=(\bI-\bW)^{-1}\bD\bZ_t$ which is similar to the OUT model in \cite{roy2024bayesian}, except that there is no additional orthogonal transformation applied to $\bZ_t$, that is, the orthogonal transformation in the OUT model is fixed at the identity. Also, in the unrestricted model for the present problem, unlike \cite{roy2024bayesian}, the matrix $\bW$ is not strictly lower triangular; only the true one is after a simultaneous row-column permutation. However, this does not affect the arguments used in the proof. 

The precision matrix of $\bY_t$ is given by $\bOmega=(\bI-\bW)\trans\bD^{-1}(\bI-\bW)$ and its true value $\bOmega_0=(\bI-\bW_0)\trans\bD_0^{-1}(\bI-\bW_0)$. First, we establish the posterior contraction rate $\epsilon_T$ for $\bOmega$ at $\bOmega_0$: 
\begin{align} 
\label{rate omega}
\Pi(\|(\bI-\bW)\trans\bD^{-1}(\bI-\bW)-(\bI-\bW_0)\trans\bD_0^{-1}(\bI-\bW_0)\|_2>m_T \epsilon_T\mid \bY)\rightarrow 0 
\end{align}
for every $m_T\to\infty$. 

To this end, we need to show that $\epsilon_T$ is the pre-rate on prior concentration given by Condition (i) of the general pseudo-posterior contraction rate result given by Theorem~1 of \cite{roy2024bayesian}. This variation of the standard rate result Theorem~8.19 of \cite{ghosal2017fundamentals} for non-i.i.d. observation is needed because we have used the Whittle likelihood to define the posterior distribution instead of the actual (far more complicated) likelihood. To this end, we proceed as in the proof of Theorem~1 of \cite{roy2024bayesian} and bound the Frobenius distance for $\bOmega$ in terms of $\bD$ and $\bW$. Since the true $\bW_0$ has only $s$ non-zero entries, the horseshoe prior has log-prior concentration rate $s\log (1/\epsilon)$ by standard estimates like Lemma A.4 from \cite{song2023nearly} or Lemma 6 of \cite{bernardo1998information} under the assumed conditions on the hyperparameters. For $\bD$, as we use the clustering encouraging prior through the stick-breaking process, the estimate for i.i.d. prior used in \cite{roy2024bayesian} cannot be directly used. Nevertheless, as clustering only reduces dimension, the induced prior on the number of clusters $K$ from the Pitman-Yor stick-breaking process decays only like $e^{-c K\log p}$ and $\log p\lesssim \log T\lesssim \log(1/\epsilon_T)$, the estimate of log-prior concentration rate $p\log (1/\epsilon)$ from \cite{roy2024bayesian} remains valid. The prior concentration estimate for the spectral densities is directly used from \cite{roy2024bayesian}. Combining these, the pre-rate $\epsilon_T$ is established. 
Hence, the posterior concentration rate for $\bOmega$ at $\bOmega_0$ in terms of the Frobenius distance is $\epsilon_T$. 

Next, we translate this conclusion to a result on posterior contraction for $\bW$ at $\bW_0$. 

If $\bD$ is known as $\bD_0$, then by Lemma~\ref{lem:soundness}, the posterior for $\bW$ contracts at $\bW_0$ at the rate $\sqrt{p\epsilon_T}$. 

Now consider the scenario that $\bD$ is not known. As the posterior for $\bOmega$ contracts at $\bOmega_0$ and the map $\bOmega\mapsto (\bW, \bD, f_j: j=1,\ldots,p)$ is identifiable by Lemma~\ref{lem:identifiability without}, it remains to show the continuity of the map.


For a matrix $\bA$, let $\diag(\bA)$ stand for the vector of its diagonal elements.  
Consider the forward map $h$  taking the value 
$$\mathrm{LT}((\bI-\bW)\trans\bD^{-1}(\bI-\bW))=(\!(\be_{i}\trans(\bI-\bW)\trans\bD^{-1}(\bI-\bW)\be_j): i\le j\!)$$ 
as a function of $\bw=\vect^{*}(\bW)=\vect(\bW)\setminus \diag(\bW)$ 
since the diagonals are fixed at zero, and $\bd=\diag(\bD)$, 
where $\mathrm{LT}$ stands for the operation of extracting the lower triangular entries of a symmetric matrix (thus avoiding duplication) and $\be_{i}$ is the unit vector with one at the $i$th position. For a given sufficiently small $\epsilon>0$, consider $V=\{\bOmega:\|\bOmega-h(\bw_0,\bd_0)\|_2<\epsilon\}$, where $\bw_0$ and $\bd_0$ are the true values of $\bw$ and $\bd$. It suffices to note that the map is differentiable with a non-zero gradient at $(\bw_0,\bd_0)$, so that the inverse map will be one-to-one and differentiable, and hence also continuous, on $V$. Observe that the derivatives are given by 
\begin{align*}
   &\frac{\partial \{\be_{i}\trans(\bI-\bW)\trans\bD^{-1}(\bI-\bW)\be_j\}}{\partial \bw}=-\mathrm{LT}(\{\bD^{-1}(\bI-\bW)+(\bI-\bW)\trans\bD^{-1}\}\be_{j}\be_{i}\trans)\\ 
   &\frac{\partial\, \mathrm{LT}((\bI-\bW)\trans\bD^{-1}(\bI-\bW))}{\partial d_{i,i}} = -d_{i,i}^{-2} \vect(\bh_{i}\bh_{i}\trans)\trans,
\end{align*}
where $\bH=(\bI-\bW)$ and we have $\bH\trans\bD^{-1}\bH = \sum_{i=1}^{p} d_{i,i}^{-1}\bh_{i}\bh_{i}\trans$. 

The gradient vector can be arranged as a  $p^2\times p(p+1)/2$-matrix $\bJ=(\bJ_{1}\trans,\bJ_2\trans)\trans$given by  
\begin{align*} 
\bJ_1 & = (\!(-\mathrm{LT}(\{\bD^{-1}(\bI-\bW)+(\bI-\bW)\trans\bD^{-1}\}\be_{j}\be_{i}\trans): {i\le j})\!), \\
\bJ_2 & =-(d_{1,1}^{-2} \,\mathrm{LT}(\bh_{1}\bh_{1}\trans),\ldots,d_{p,p}^{-2} \,\mathrm{LT}(\bh_{p}\bh_{p})\trans)\trans.
\end{align*} 
Note that, the point $(\bW_{0},\bd_0)$ belongs to $\mathbb{R}^{p(p+1)/2}$ due to the DAG-ness condition. Hence, by the general Implicit Function Theorem (cf., Theorem~3.1 of  \cite{blackadar2015general}) applied to 
$$H(\bw,\bd, \bg)=h(\bw,\bd)-\bg=\mathrm{LT}((\bI-\bW)\trans\bD^{-1}(\bI-\bW))-\bg,$$ 
the map $h^{-1}$ continuous around a small neighborhood $\N$ of $(\bw_{0},\bd_{0},\bg_{0})$ such that whenever $(\bw,\bd,\bg)\in\N$ we have $H(\bw,\bd,\bg)=0$, where $\bg_0=\mathrm{LT}((\bI-\bW_{0})\trans\bD_{0}^{-1}(\bI-\bW_{0}))$.

It remains to show that the posterior induced by the map $\bW\mapsto \bW^*$ given by \eqref{projection operator} inherits the concentration properties of the unrestricted posterior for $\bW$. 
Here $\bW^*=\arg\min_{\bbeta} L(\bW,\bbeta) + P_{\lambda,\alpha,\rho}(\bbeta)$, where $P_{\lambda,\alpha,\rho}(\bbeta)$ is the penalty part and $L(\bW,\bbeta)=\frac{1}{T}\|\bW\bY-\bbeta\bY\|_2^2$.
Let for a general loss $L(\bW_{1},\bW_{2})\le cL(\bW_{1},\bW_{3})+cL(\bW_{2},\bW_{3})$ for some constant $c$. In case of square-error loss like ours, we can take $c=2$.
We then automatically have
$L(\bW,\bW^*)+P_{\lambda,\alpha,\rho}(\bW^*)\le L(\bW,\bW_0)+P_{\lambda,\alpha,\rho}(\bW_0)$, giving that $L(\bW_0,\bW^*)\le (c+1)L(\bW,\bW_0)+cP_{\lambda,\alpha,\rho}(\bW_0)$.
For a $\bW_0$ with the true structure, such as the DAG structure, and a suitably small penalty parameter, the first term typically dominates the second term.
Note that under the truth, $\|\eE\left(\frac{1}{T}\bY\bY\trans\right) \|_{op}\le \max_i d_{0,i}$.
Hence, the posterior contraction rate is inherited by choosing the penalty terms carefully for sufficiently large $T$. Thus, the projection-posterior contraction rate for the matrix $\bW$ inducing the DAG-structure is $\epsilon_T$ characterized by \eqref{eq:contraction rate} when $\bD$ is known, and the projection posterior is consistent if $\bD$ is unknown. 
\end{proof}

   \section*{Funding}
The authors would like to thank the National Science Foundation for the Collaborative Research Grants DMS-2210280 (for Subhashis Ghosal), DMS-2210281 (for Anindya Roy), and DMS-2210282 (for Arkaprava Roy).
  
	\bibliographystyle{plainnat}
	\bibliography{new,proposalNSF}

\end{document}